\newif\ifarxiv
\let\mainsection\section
\newcommand\smref[1]{SM-\ref*{#1}~\cite{sm_ref}}
\newcommand\smrefnocite[1]{SM-\ref*{#1}}
 \let\savedaddcontentsline\addcontentsline 
\global\long\def\addcontentsline#1#2#3{}%
\newcommand*{\addFileDependency}[1]{%
\typeout{(#1)}%
\@addtofilelist{#1}
\IfFileExists{#1}{}{\typeout{No file #1.}}
}
\newcommand*{\myexternaldocument}[1]{%
\externaldocument{#1}%
\addFileDependency{#1.tex}%
\addFileDependency{#1.aux}%
}
\definecolor{DarkBlue}{rgb}{0.0, 0.0, .7}
\newcommand\ns[1]{#1}
\newcommand\cut[1]{}
\global\long\def\papertitle{Optimization of nonequilibrium free energy harvesting illustrated on bacteriorhodopsin} 
\newcommand{\feasibleset}{\Lambda}
\global\long\def\numstates{n}%
\global\long\def\RB{R^b}%
\global\long\def\RC{R^c}%
\global\long\def\RBC{R}%
\global\long\def\pp{\bm{p}}%
\global\long\def\pstB{{\bm{\pi}^b}}%
\global\long\def\pstBnb{\pi^b}%
\global\long\def\pstBCnb{\pi}%
\global\long\def\pstBC{\bm{\pi}}%
\global\long\def\pwr{\Theta}%
\global\long\def\pwrSt{\theta}
\global\long\def\invBeta{{\beta}^{-1}}%
\global\long\def\JJ{{\boldsymbol{J}}}
\global\long\def\JJji{J_{ji}}
\global\long\def\JJij{J_{ij}}
\global\long\def\JJcij{{J^c_{ij}}}
\global\long\def\JJcji{{J^c_{ji}}}
\global\long\def\JJc{{\boldsymbol{J}}^c}
\global\long\def\incmatrix{\mathbb{B}}
\global\long\def\epr{\dot{\Sigma}}
\global\long\def\FF{\mathcal{F}}%
\global\long\def\GG{\mathcal{G}}%
\newcommand\GGd{\dot{\GG}}%
\newcommand\FFd{\dot{\FF}}%
\newcommand\FFdB[1]{\FFd^b(#1)}%
\newcommand\FFdC[1]{\FFd^c(#1)}%
\newcommand\GGdBbase{\GGd^b}%
\newcommand\GGdCbase{\GGd^c}%
\newcommand\GGdB[1]{\GGdBbase(#1)}%
\newcommand\GGdC[1]{\GGdCbase(#1)}%
\newcommand\GGdBCbase{\GGd^\mathrm{tot}}%
\newcommand\derivSRB[1]{\dot{S}^b(#1)}
\global\long\def\gC{\bm{g}^c}%
\global\long\def\gCi{{g}^c}%
\global\long\def\gB{\bm{g}^b}%
\global\long\def\gBi{{g}^b}%
\global\long\def\ggB{\dot{\bm{g}}^b}%
\global\long\def\ggBi{\dot{g}^b}%
\global\long\def\psiLRvec{\boldsymbol{\phi}^{\text{LR}}}
\global\long\def\hvecbold{\boldsymbol{\phi}}
\global\long\def\Gmax{\mathscr{G}}%
\global\long\def\hvec{\phi}
\global\long\def\psiM{\phi^{\text{M}}}%
\global\long\def\psiLR{\phi^{\text{LR}}}%
\global\long\def\popt{\pp^{*}}%
\global\long\def\popti{{p}^*}%
\global\long\def\iopt{{i}^*}%
\newcommand{\evndx}{\alpha}
\global\long\def\uu{\boldsymbol{u}}
\global\long\def\mm{\boldsymbol{m}}
\global\long\def\mma{\mm^{\evndx}}
\global\long\def\uua{\uu^{\evndx}}
\newcommand{\uuIX}[1]{\uu^{#1}}
\begin{document}
\title{\papertitle}
\author{Jordi Piñero}
\affiliation{ICREA-Complex Systems Lab, Universitat Pompeu Fabra, 08003 Barcelona,
Spain}
\author{Ricard Solé}
\affiliation{ICREA-Complex Systems Lab, Universitat Pompeu Fabra, 08003 Barcelona,
Spain}
\affiliation{Institut de Biologia Evolutiva (CSIC-UPF), 08003 Barcelona, Spain}
\affiliation{Santa Fe Institute, Santa Fe, New Mexico 87501, United States}
\author{Artemy Kolchinsky}
\email{artemyk@gmail.com}

\affiliation{ICREA-Complex Systems Lab, Universitat Pompeu Fabra, 08003 Barcelona,
Spain}
\affiliation{Universal Biology Institute, The University of Tokyo, 7-3-1 Hongo,
Bunkyo-ku, Tokyo 113-0033, Japan}
\begin{abstract}
Harvesting free energy from the environment is essential for the operation of many biological and artificial systems. We use techniques from stochastic thermodynamics to investigate the maximum rate of harvesting achievable by optimizing a set of reactions in a Markovian system, possibly under various kinds of 
topological, kinetic, and thermodynamic constraints. 
This question is relevant for the optimal design of new harvesting devices as well as for quantifying the efficiency of existing systems. 
We first demonstrate that the maximum harvesting rate can be expressed as a constrained convex optimization problem. We illustrate it on bacteriorhodopsin, a light-driven proton pump from Archaea, which we find is close to optimal under realistic conditions. 
In our second result, we solve the optimization problem in closed-form in three physically meaningful limiting regimes. These closed-form solutions are illustrated on two idealized models of unicyclic harvesting systems.
\end{abstract}
\maketitle

\mainsection{Introduction}
Many molecular systems, both biological and artificial, harvest free energy from their environments. Biological organisms require  free energy 
to grow and replicate~\cite{hill_free_1977,morowitz2012foundations}, and
many species undergo selection for increased  harvesting~\cite{lotka1922contribution,brown1993evolution,watt1985bioenergetics,judson2017energy}. Artificial harvesting systems have also been constructed and optimized in the field of synthetic biology~\cite{choi2005artificial,lee2018photosynthetic,berhanu2019artificial,villa2019fuel,kleineberg2020light,miller2020light,guindani2022synthetic,albanese2023light}. The optimization of free energy harvesting is thus a central problem both in biology and engineering.

As an example, consider a harvesting system such as a biological metabolic network that converts glucose to ATP~\cite{chandel2021glycolysis}. Suppose that the kinetic and thermodynamic parameters of one or more reactions can be optimized, either by natural selection or artificial design. What is the maximum rate of free energy harvesting that the network can achieve, and what are the kinetic and thermodynamic parameters that achieve it?  These questions are relevant both for design of optimal harvesting devices and for quantifying the efficiency of existing systems. %

In this paper, we use techniques from stochastic thermodynamics to derive bounds on maximum rate of free energy harvesting.  
\ns{
We consider a harvesting system in
nonequilibrium steady state which is coupled to an external source of free
energy, an internal free energy reservoir, and a heat bath. 
The setup is well-suited for studying the kinds of small-scale systems usually considered in stochastic thermodynamics~\cite{seifert2012stochastic}, where assumptions of local detailed balance and steady state are justified. The steady-state assumption is reasonable in many molecular systems, where there is a separation of timescales between internal relaxation and environmental change. 

We suppose that the system's dynamics can be separate into two kinds of processes, termed \emph{baseline} and \emph{control}. %
The baseline processes, which are held fixed, mediate the coupling to the external source of free energy. Control refers to all other processes which can be optimized for increasing the \emph{harvesting rate} at which free energy flows into the internal reservoir. 
The particular separation of baseline/control generally depends on domain knowledge about the system and the scientific question at hand. For example, to study the efficiency of a particular reaction in a metabolic network, that reaction may be treated as control while the other reactions are baseline.
The baseline/control separation is similar to the distinction in control theory between ``plant'' (a given system with fixed dynamical laws) and ``controller'' (the part that undergoes optimization)~\cite{doyle2013feedback}.
}

In our first set of results, we demonstrate that the optimization of the harvesting rate can be expressed as a convex optimization problem. The optimal solution of this problem determines both the maximum harvesting rate and the specific control processes  that achieve that maximum.  Importantly, the optimization can also account for various types of constraints on the possible network topology, kinetic timescales, and thermodynamic forces of the control processes. 

\begin{figure}[b]
\centering
\includegraphics[width=1\columnwidth]{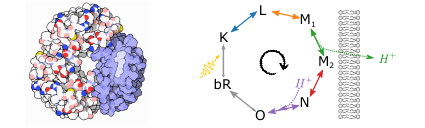}
 
\caption{\label{fig:1}Left: Bacteriorhodopsin is a biomolecular free energy harvesting machine~\cite{lanyi2004bacteriorhodopsin}, illustrated in its trimer configuration by D.~Goodsell (CC-BY-4.0)  \cite{goodsell2002bacteriorhodopsin}.  
Right: during each turn of the bacteriorhodopsin photocycle, the molecule absorbs a photon 
and pumps a proton against the cell's membrane potential. 
}
\end{figure}

We illustrate our results on bacteriorhodopsin (Fig.~\ref{fig:1}), a proton-pumping membrane protein. Bacteriorhodopsin is a photosynthetic system found in Archaea, with close relatives in bacteria~\cite{beja2000bacterial,gomez2019microbial}.
It is also
used in 
many artificial light-harvesting systems~\cite{choi2005artificial,berhanu2019artificial,lee2018photosynthetic,kleineberg2020light}. \ns{Using published thermodynamic and kinetic data, we develop a thermodynamically consistent stochastic model of bacteriorhodopsin.} 
We demonstrate that, under normal operating conditions, the bacteriorhodopsin system is remarkably efficient.

\ns{Our main result is formulated as a convex optimization problem which must be solved numerically in the general case. In the second part of
this paper, we derive closed-form solutions of this problem for
three physically meaningful regimes:
the weakly-driven linear response regime, the irreversible deterministic regime, and the intermediate near-deterministic regime. 
These solutions illustrate how optimal harvesting reflects the ``alignment'' between free energy input and relaxation dynamics. We illustrate these closed-form solutions on two unicyclic systems, which may be interpreted as idealized models of two types of nonequilibrium harvesting devices.}

\ns{We finish our paper with a brief Discussion. There we relate our approach to previous work, including maximization of power output in steady-state engines and flux balance analysis. We also propose directions for future research.}

\mainsection{Setup}
We consider a system with $n$ mesostates described by the distribution
$\pp=(p_{1},\dots,p_{n}) \in \mathbb{R}_+^n$.  The distribution evolves according to the master
equation $\dot{p}_{i}=\sum_{j}\RBC_{ij}p_{j}$, where $\RBC_{ji}$ is
the transition rate $i\to j$ ($\RBC_{ii}=-\sum_j \RBC_{ji}$). 
Usually $\pp$ represents a probability distribution of a stochastic system with Markovian dynamics~\cite{schnakenberg1976network,esposito2010three}. 
However, under an appropriate choice of units, it may also represent chemical concentrations in a deterministic chemical reaction network with pseudo-first-order reactions, such as an enzymatic cycle~\cite{wachtel2018thermodynamically,cornish2013fundamentals}.

The system is coupled to a heat bath at inverse temperature $\beta=1/k_B T$, %
an internal free energy reservoir, and 
a nonequilibrium environment that serves as an external source of free energy. For example, in a metabolic network, one may consider an internal reservoir of ATP and an external source of glucose. 
The system has %
nonequilibrium free energy
\begin{align}
\FF(\pp)=\sum_{i}p_{i}f_{i}-\invBeta S(\pp)\,,
\label{eq:FFdef}
\end{align}
where  $S(\pp):=-\sum_{i}p_{i}\ln p_{i}$
is the Shannon entropy and $f_{i}$ is the internal
free energy of mesostate $i$~\cite{horowitz2013imitating}. 

\cut{To investigate this question, we separate the dynamics of the system into two kinds of processes, termed \emph{baseline} and \emph{control}. %
Control processes are those that are optimized to increase harvesting, while the baseline processes are held fixed.
The particular separation of baseline/control generally depends on domain knowledge about the system and the scientific question. For example, to study the efficiency of a particular reaction in a metabolic network, that reaction may be treated as control while the other reactions are baseline.
The baseline/control separation is similar to the distinction in control theory between ``plant'' (a given system with fixed dynamical laws) and ``controller'' (the part that undergoes optimization)~\cite{doyle2013feedback}. 
In  stochastic thermodynamics, the baseline/control distinction has been    
used to study autonomous Maxwellian demons~\cite{shiraishi2015role,shiraishi2016measurement}, 
counterdiabatic driving~\cite{takahashi2020nonadiabatic}, and the cost of maintaining a nonequilibrium steady state~\cite{horowitz2017information,horowitz_minimum_2017}. 
}

As mentioned in the Introduction, we suppose that the dynamics of the system can be separated into \emph{baseline} and \emph{control} processes. 
We make one important assumption in our analysis:
the control processes are only coupled to the heat bath and internal free energy reservoir, but not directly to the external source of free energy. 
This means that control can only increase harvesting by interacting with the baseline, rather than directly increasing the inflow of free energy from the external source. 
For example, in a metabolic network, control processes cannot directly increase the import of glucose, but they can affect the rate at which glucose is converted into ATP by optimizing intermediate reactions and mechanisms. Control processes may be driven by the internal reservoir (e.g., coupled to ATP hydrolysis) or undriven (e.g.,  enzymes).

To formalize the baseline/control distinction, we write the rate matrix as $\RBC=\RB+\RC$, where $\RB_{ji}$ and $\RC_{ji}$  represent the transition rate of $i\to j$ due to baseline and control. 
Given distribution $\pp$, the increase of system free energy due to baseline processes is 
\begin{align}
\FFdB{\pp} =\sum_{i,j}p_i\RB_{ji} (f_{j}+\invBeta\ln p_{j})\,.
\label{eq:deltaF}
\end{align}
The increase due to control processes is defined analogously but using rate matrix $\RC$,
\begin{align}
\FFdC{\pp} =\sum_{i,j}p_i\RC_{ji} (f_{j}+\invBeta\ln p_{j})\,.
\label{eq:Fcdef}
\end{align}
The \emph{harvesting rate} is the rate at which free energy flows to the internal reservoir. The harvesting rate due to baseline processes is
\begin{align}
    \GGdB{\pp} &= \sum_{i,j}p_{i}\RB_{ji} \gBi_{ji}+\sum_i p_i \ggBi_{i}\,,
\end{align}
where $\gBi_{ji}$ is the free energy increase in the internal reservoir due to a single baseline transition $i\to j$ 
and $\ggBi_{i}$
is the rate of free energy flow to the {internal} reservoir due to internal transitions
within $i$ (assuming $i$ is a coarse-grained mesostate). 
Similarly, the harvesting rate due to control processes is
\begin{align}
        \GGdC{\pp} &= \sum_{i,j}p_i\RC_{ji} \gCi_{ji}\,,
\end{align}
where $\gCi_{ji}$ is the free energy increase in the internal reservoir due to  control transition $i\to j$. For simplicity, we assume that control cannot exchange free energy with the internal reservoir due to internal transitions within $i$.  Negative values of $\gBi_{ji},\ggBi_i,\gCi_{ji}$ indicate driving done on the system by the internal reservoir. 

For a concrete example of how $(\RB,\RC,\bm{f},\gB,\ggB,\gC)$ are defined for a real biomolecular system, see the bacteriorhodopsin example below and \smref{sec:br}.

As standard in stochastic thermodynamics, we assume that %
control processes obey local detailed balance (LDB)~\cite{seifert2012stochastic,maes2021local},
\begin{align}
\ln (\RC_{ji}/\RC_{ij})= \beta( f_{i}-f_{j} - \gCi_{ji})  &\qquad\text{for } \RC_{ji} > 0\,.\label{eq:ldb}
\end{align}
Eq.~\eqref{eq:ldb} guarantees that the irreversibility of each control transition is determined by the amount of free energy dissipated by that transition~\cite{brown_sivak_2020}. Observe that the right side accounts for free energy changes of the system ($f_i-f_j$) and the internal reservoir ($\gCi_{ji}$), but not the external source. This formalizes the assumption that control processes do not exchange free energy with the external source. 

We do not require that the baseline rate matrix obeys LDB, although it will often do so %
for reasons of thermodynamic consistency.

\mainsection{Maximum harvesting rate}
Suppose that the combined rate matrix $\RBC=\RB+\RC$ has the steady-state distribution $\pstBC$, which satisfies $\RB\pstBC + \RC\pstBC = 0$. 
The total steady-state harvesting rate due to baseline and control is
\begin{align}
\GGdBCbase= \GGdB{\pstBC} +\GGdC{\pstBC}.
\label{eq:ggG}
\end{align}

We seek to maximize this harvesting rate by varying the parameters of the control processes $(\RC,\gC)$ while holding the baseline parameters  ($\bm{f},\RB, \gB, \ggB$) fixed. \ns{Finding this maximum would allow us to determine fundamental bounds on harvesting and to evaluate the efficiency of existing harvesting systems.}

\ns{However, $\GGdBCbase$ is not a concave function of the parameters $(\RC,\gC)$, therefore maximization of~\eqref{eq:ggG} is not a convex optimization problem and is not generally intractable. In the following, we reformulate this maximization as a convex optimization with a physically interpretable objective. This allows us to solve the optimization numerically and, at least for some special cases, also in closed form.}

To begin, we rewrite \eqref{eq:ggG} as %
\begin{align}
\GGdBCbase= \FFdB{\pstBC} + \GGdB{\pstBC} - \epr(\JJc)/\beta .
\label{eq:gg0}
\end{align}
where we introduced the Schnackenberg formula for the entropy production rate (EPR)~\cite{schnakenberg1976network},
\begin{align}
\epr(\JJc)=\sum_{i\ne j} \JJcji \ln (\JJcji/\JJcij)\ge 0\,,
\label{eq:eprdef}
\end{align}
where  $\JJcji=\pstBCnb_i \RC_{ji} \ge 0$ is the one-way probability flux due to control transition $i \to j$. 

\ns{Eq.~\eqref{eq:gg0} has an intuitive physical interpretation: the total steady-state harvesting rate  is the rate of free energy increase in the system and internal reservoir due to baseline, minus the rate of dissipation (EPR) due to the control fluxes. The derivation of this result proceeds in two steps (see \smref{sec:derivation-of-3} for details). The first step is to show that $\epr(\JJc)=-\beta[\FFdC{\pstBC} +\beta \GGdC{\pstBC}]$, which follows by combining \eqref{eq:eprdef} with \eqref{eq:Fcdef} and \eqref{eq:ldb}. This states that the EPR due to control is proportional to the free energy loss in the system and internal reservoir due to control. The second step is to show that $\FFdB{\pstBC} +\FFdC{\pstBC}=0$, which follows because the left side is the overall derivative of the nonequilibrium free energy $\FF$, as defined in \eqref{eq:FFdef}, therefore it must  vanish in steady state. The result~\eqref{eq:gg0} then follows by combining with~\eqref{eq:ggG} and rearranging.}

\ns{Importantly, when expressed in the form~\eqref{eq:gg0}, the harvesting rate is a concave function of the steady-state distribution $\pstBC$ and the control fluxes $\JJc$ (see \smref{sec:optimization-properties-Gmax}). To find the maximum harvesting rate, we optimize \eqref{eq:gg0} with respect to  $\pstBC$ and $\JJc$.}
Note that varying $\pstBC$ and $\JJc$ is equivalent to varying the control rate matrix via $\RC_{ji} = \JJcji/\pstBCnb_i$ and control driving $\gCi_{ji}$ via \eqref{eq:ldb}. 
\ns{However, when performing this optimization, we must also ensure that $\pstBC$ is the steady-state distribution induced by the fluxes $\JJc$. This condition can be expressed as a linear constraint on $\pstBC$ and $\JJc$ via %
$\RB\pstBC + \incmatrix \JJc = 0$.} 
Here $\JJc$ is treated as a vector in $\mathbb{R}^{n^2}$ and  $\incmatrix\in \mathbb{R}^{n\times n^2}$ 
is the incidence matrix with entries $\incmatrix_{k,ij}=\delta_{ki}-\delta_{kj}$, which guarantees $\RC\pstBC=\incmatrix \JJc$.

Combining, we arrive at the bound $\GGdBCbase \le \Gmax$, where 
\begin{align}
\Gmax&=  \sup_{(\pp, \JJ) \in \feasibleset  : \RB \pp+\incmatrix \JJ =0 } \, \FFdB{\pp} + \GGdB{\pp} - \epr(\JJ)/\beta .
\label{eq:Gmax00}
\end{align}
In this expression, $\feasibleset$ is the  feasible set of distributions $\pp$ and control fluxes $\JJ$. At a minimum, $\feasibleset$ ensures the validity of the distribution $\pp$ and the fluxes $\JJ$ via the linear constraints $\sum p_i=1$, $p_i \ge 0$, and $\JJji\ge0$. We write $\sup$ instead of $\max$ because the set of allowed fluxes is potentially unbounded. \ns{Eq.~\eqref{eq:Gmax00} implies a tradeoff between the total gain of free energy in the system and internal reservoir due to baseline (which depends only on $\pp$) and the dissipation due to control fluxes (which depends only on $\JJ$).}

Importantly, the feasible set $\feasibleset$ can 
include additional convex constraints, which %
may introduce topological, kinetic, thermodynamic, etc. restrictions on the control processes. Topological constraints restrict which transitions can be controlled; e.g., $\JJji=0$ ensures that control does not use transition $i\to j$). 
Kinetic constraints restrict timescales of control processes, as might reflect underlying physical processes like diffusion; e.g., an upper bound on control transition rate $\RC_{ji} = \JJji/p_i \le \kappa$ can be enforced via $\JJji \le p_i\kappa$.  
Thermodynamic constraints bound the affinity~\cite{schnakenberg1976network} of control transitions; e.g., $\JJji e^{-\mathcal{A}} \le \JJij \le \JJji e^{\mathcal{A}}$ ensures that $|\ln(\JJij/\JJji)|\le \mathcal{A}$. The above examples all involve linear constraints. An example of a nonlinear, but still convex, constraint is an upper bound on the EPR incurred by control, $\epr(\JJ) \le \epr^{c}_{\max}$. %

\ns{Eq.~\eqref{eq:Gmax00} is our first main result.} Importantly, $\Gmax$ is defined purely in terms of the thermodynamic and kinetic
properties of the baseline process, along with desired  constraints encoded in $\feasibleset$.  Thus, $\Gmax$ is the maximum steady-state harvesting rate that can be achieved by any allowed control processes, given a fixed baseline.  In addition, Eq.~\eqref{eq:Gmax00} involves the maximization of a concave objective given convex constraints. 
This is equivalent to the minimization of a convex objective, thus Eq.~\eqref{eq:Gmax00} is a convex optimization problem that can be efficiently solved  using standard numerical techniques~\cite{boyd2004convex}. 
The optimization also identifies an optimal steady-state distribution $\popt$ and control fluxes ${\JJ}^*$ that achieve the maximum harvesting rate $\Gmax$ (or come arbitrarily close to achieving it). These fix the optimal control rate matrix via $R^{c*}_{ji}={\JJij^*}/p^*_i$. Thus, our optimization specifies an upper bound on harvesting as well as the optimal control strategy that achieves this bound.

There is an important special case in which our optimization problem is simplified. Suppose that $\feasibleset$ does not enforce additional constraints on $\pp$ and $\JJ$ (more generally, we permit topological constraints if the graph of allowed transitions is connected and contains all $n$ states). Then, the objective is maximized in limit of fast control, $\JJ\to \infty$ and $\epr(\JJ)\to 0$. 
We can then write \eqref{eq:Gmax00} as an optimization over steady-state distributions:
\begin{align}
\Gmax:=\max_{\pp:\sum p_i=1,p_i\ge 0}\,\FFdB{\pp}+\GGdB{\pp}\,.\label{eq:wmax-1}
\end{align}
The optimal $\popt$ is unique as long as the baseline rate matrix is irreducible. The optimal control rate matrix is very fast ($R^{c*}\to \infty$) and obeys detailed balance for $\popt$, $R^{c*}_{ji} \popti_i= R^{c*}_{ij} \popti_j$. For details, see \smrefnocite{sec:optimization-properties-wmax} and \smref{sec:achievability}.

\mainsection{Bacteriorhodopsin}

We illustrate our results using bacteriorhodopsin, a light-driven proton pump from Archaea~\cite{lanyi2004bacteriorhodopsin}. %

We define a thermodynamically consistent 
model of the bacteriorhodopsin cycle using published thermodynamic~\cite{varo1991thermodynamics} and kinetic~\cite{lorenz-fonfriaSpectroscopicKineticEvidence2009} data ({see \smref{sec:br}}). 
The system operates in a cyclical manner, absorbing a photon
and pumping a proton during each turn of the cycle (Fig.~\ref{fig:1}, right). 
Specifically, the transition $M_{1}\to M_{2}$ pumps a proton out of the cell. This stores free energy in the internal
reservoir (the membrane electrochemical potential),
\begin{align}
g_{M_{2} M_{1}}=-g_{M_1 M_2}=e\Delta\psi-(\ln10)\invBeta\Delta\text{pH}\,,
\label{eq:pstep}
\end{align}
where $e$ is the elementary charge constant, $\Delta\psi$ is the
membrane electrical potential, and $\Delta\text{pH}$ is the membrane pH difference. 
The other transitions in the cycle do not affect the free energy of the internal reservoir ($g_{ij}=0$ and $\dot{g}_i=0$). 

During the transition $\mathrm{bR}\to K$, the system leaves the ground state
by absorbing a photon at 580nm, thereby harvesting free energy from the external source and dissipating some heat to the environment at $T=293^{\circ}$ K.  
This transition is much faster (picoseconds) than the other transitions in the photocycle (micro- to milliseconds). %
As commonly done in photochemistry~\cite{penocchio2021nonequilibrium}, we coarse-grain
transitions $O\to\mathrm{bR}$ and $\mathrm{bR}\to K$ into a single effective transition $O\to K$. 

To explore the performance
of bacteriorhodopsin under different conditions, we vary the membrane electrical potential $\Delta\psi$
between $-75$ and $350$ mV, while using a realistic fixed $\Delta\text{pH}=-0.6$~\cite{lanyi1978light}. 
We show the actual harvesting rate ($\GGdBCbase$ in units of $k_BT/$sec) at different potentials as a black line in Fig.~\ref{fig:br-data}~(a). At a plausible \emph{in vivo} $\Delta\psi=120$ mV \cite{lanyi1978light}, the model exhibits a steady-state current of 11.5 protons/sec, each proton carrying 6.1 $k_B T$ of free energy, corresponding to ${\GGdBCbase}\approx 70 \,k_B T$/sec. The largest output is achieved near the \emph{in vivo} potential: at lower potentials, the cycle current saturates while the free energy per proton drops, and at higher potentials the pump stalls.

Next, 
we quantify the maximum harvesting rate that can be achieved by optimizing the parameters of individual transitions. 
This analysis is relevant for understanding limits  %
on  increasing bacteriorhodopsin output, whether via natural selection or 
biosynthetic methods~\cite{miller1990kinetic,seitz2000kinetic,wise2002optimization,hillebrecht2004directed}. 
Interestingly, such transition-level optimization may be feasible in bacteriorhodopsin, as the properties of several transitions are known to be individually controlled by particular amino acid residues in the bacteriorhodopsin protein~\cite{miller1990kinetic,tittor1994specific,balashov1999proton,li2000protein}.

For each reversible transition in the cycle, for instance  $N\leftrightarrow O$, we  define the baseline as the rest of the cycle without that transition. We then optimize control under the topological constraint that only the relevant transition (e.g., $N\leftrightarrow O$) is allowed. 
The analysis is repeated for all transitions, except for the (coarse-grained) photon-absorbing transition $O\leftrightarrow K$, which is in accordance with our assumption that control cannot directly exchange free energy directly with the external source.

\begin{figure}
\includegraphics[width=1\columnwidth]{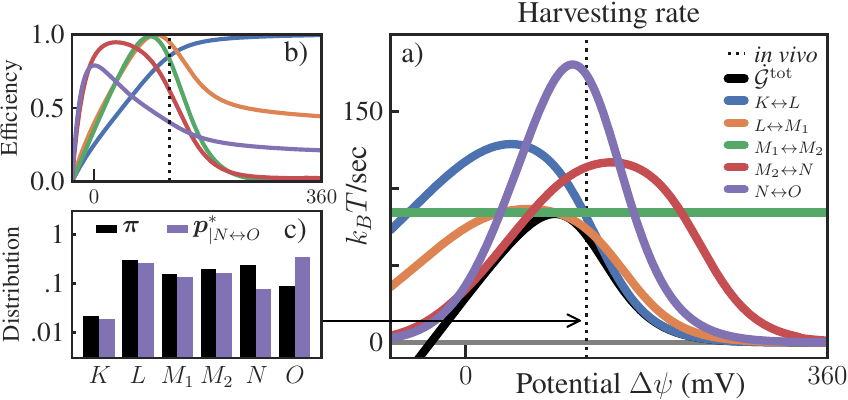}
\caption{\label{fig:br-data} 
(a) Comparison of the actual harvesting rate $\GGdBCbase$ at different electrical potentials $\Delta\psi$, versus maximum rate $\Gmax$ achieved by optimizing five intermediate transitions (color scheme from Fig.~\ref{fig:1} Right). (b) Efficiency $\GGdBCbase/\Gmax$ computed while separately optimizing each transition, with colors as in (a).  (c) The actual  steady state $\pstBC$ versus the optimal distribution $\protect\popt$ when optimizing the $N\leftrightarrow O$ transition (at $\Delta\psi=120$ mV).%
}
\end{figure}

Fig.~\ref{fig:br-data}~(a) shows $\Gmax$, the maximum $\GGdBCbase$ achievable by optimizing each reversible transition.  In Fig.~\ref{fig:br-data}~(b),  we also show the \emph{efficiency}  $\GGdBCbase/\Gmax \le 1$ for each transition,  that is the ratio of the actual and maximum harvesting rate.

Several transitions, such as $K\leftrightarrow L$,$L\leftrightarrow M_1$,$M_1\leftrightarrow M_2$, are remarkably efficient ($\ge 85\%$) near \emph{in vivo} membrane potentials. The reprotonation step $N\leftrightarrow O$ is the least efficient ($\sim 40\%$) and also has the slowest kinetics of the 5 transitions studied in Fig.~\ref{fig:br-data}. This suggests that $N\leftrightarrow O$ is a bottleneck whose optimization can have a big impact on the harvesting rate, while optimization of other non-bottleneck transitions has a more limited effect.

Observe that $\Gmax$ for $M_1 \leftrightarrow M_2$ does not depend on $\Delta \psi$. This is because $\Gmax$ is a function of baseline properties, which do not depend on the membrane potential when $M_1 \leftrightarrow M_2$ is treated as control. Conversely, $M_1 \leftrightarrow M_2$ as control transition can be optimized by 
varying the membrane potential %
and/or scaling up the forward/backward rates. Our results show that this transition is very close to optimal at \emph{in vivo} membrane potentials and kinetic timescales.

Optimal distributions $\popt$ are also obtained, with a typical one shown in Fig.~\ref{fig:br-data}~(c). We find a consistent shift toward state $O$, which accelerates the reset of the cycle and increases the flux across the photon-absorbing transition $O\to K$.

In~\smref{sec:br}, we illustrate how the efficiency of bacteriorhodopsin transitions can be evaluated under other types of constraints, including constraints on thermodynamic affinity, dynamical activity, and kinetics.

\begin{figure}
\includegraphics[width=\columnwidth]{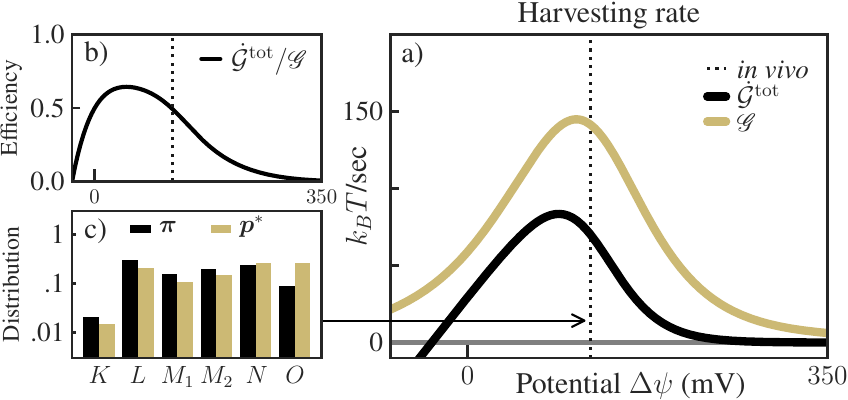}
\caption{\label{fig:br-sm-baseline-all-data}
(a) Comparison of the actual harvesting rate $\GGdBCbase$ at different electrical potentials $\Delta\psi$, versus maximum rate $\Gmax$ achieved by fixing the bacteriorhodopsin cycle as baseline and allowing any additional transitions as control. %
(b) Efficiency of the actual bacteriorhodopsin cycle with respect to the optimized~cycle. (c) The actual steady state $\pstBC$ and optimal distribution $\pp^*$ (at $\Delta\psi=120$ mV).  %
}
\end{figure}

As a final analysis, instead of  optimizing individual existing transitions in the bacteriorhodopsin cycle, we ask to what extent the harvesting rate can be increased by any \emph{additional} control processes. For example, this could involve an additional enzyme that shifts the cycle's steady state by permitting a new transition between distant states (e.g. $L\leftrightarrow N$), possibly yielding an increase of the proton pumping rate.

In this case, we treat the entire bacteriorhodopsin system as the baseline, and we  do not introduce any additional constraints on the steady-state distribution or the control fluxes. %
Then, as shown in Sec.~\smref{sec:achievability}, the objective in~\eqref{eq:Gmax00} is achieved in the limit of fast control, and the maximum harvesting rate can be found by solving the simplified optimization problem~\eqref{eq:wmax-1}. 

For this setup, Fig.~\ref{fig:br-sm-baseline-all-data}~(a) shows the baseline (actual) harvesting rate $\GGdBCbase$ and the maximum harvesting rate $\Gmax$ at varying $\Delta \psi$. Interestingly, both peak at around the \emph{in vivo} values of the membrane potential. %
In Fig.~\ref{fig:br-sm-baseline-all-data}~(b), we show that the actual bacteriorhodopsin cycle harvests approximately 50\% of the fundamental bound given by $\Gmax$  (at {\em in vivo} values of $\Delta \psi$). This suggests that bacteriorhodopsin is remarkably close to optimal, relative to improvements that could be achieved by introducing any additional control processes.

We also show the actual steady-state distribution and the optimal distribution $\popt$ in Fig.~\ref{fig:br-sm-baseline-all-data}~(c). 
The optimal distribution increases the probability of state $O$, similar to the optimal distribution found by optimizing the $N\leftrightarrow O$ transition, shown in Fig.~\ref{fig:br-data}~(c). However, unlike Fig.~\ref{fig:br-data}~(c), where most of the extra probability is taken from state $N$, in Fig.~\ref{fig:br-sm-baseline-all-data}~(c) the probability is drawn more uniformly from other states in the cycle, indicating the presence of distributed control.

\mainsection{Limiting regimes}

Our results are stated via an optimization problem that generally does not have a closed-form solution. 
In our
second set of results, we identify closed-form expressions in three physically meaningful regimes. For simplicity, here we focus on the simplified objective~\eqref{eq:wmax-1}. 
\ns{See \smref{sec:static-env} for detailed derivations, including analysis of the conditions under which each of these three approximation are be valid.}

For convenience, we  first 
rewrite~\eqref{eq:wmax-1} as %
\begin{align}
\Gmax =\max_{\pp}\,-\derivSRB{\pp}/\beta+\sum_{i}p_{i}\hvec_{i} \,,\label{eq:wmax-lin-split}
\end{align}
{where $\derivSRB{\pp}=-\sum_{i,j}\RB_{ij}p_{j}\ln p_{i}$ is the increase of the Shannon entropy of $\pp$ under $\RB$} and for convenience we defined 
 $\hvec_{i}:=\ggBi_{i}+\sum_{j}\RB_{ji}(f_{j}-f_{i}+\gBi_{ji})$. 
The objective~\eqref{eq:wmax-lin-split} contains a nonlinear term $-\derivSRB{\pp}/\beta$ quantifying the decrease of information-theoretic entropy
and a linear term $\sum_i p_i \phi_i$ quantifying the flow of thermodynamic free energy.

Next, we consider three regimes. 

\newcommand\vecAi{u^a}
\footnotetext[200]{The dynamical activity refers to the overall rate of back-and-forth jumps across a transitions, $\protect \RB_{ij}\pstBnb_j + \RB_{ji}\pstBnb_i$.}

\emph{Linear response} (LR) applies when
the optimal distribution $\popt$ is close to the steady-state distribution of the baseline rate matrix $\RB$. Suppose that $\RB$
is irreducible and has a unique steady state $\pstB$ with full support. We introduce the ``additive reversibilization''  of $\RB$,
\begin{align}
A_{ij}=(\RB_{ij}+\RB_{ji}\pstBnb_i/\pstBnb_j)/2\,.
\end{align}
The rate matrix $A$ obeys detailed balance (DB) for the steady-state distribution $\pstB$ and has the same dynamical activity~\cite{Note200} on all edges as $\RB$. $A$ may be considered as a DB version of $\RB$, and it is equal to $\RB$ when the latter obeys DB~\cite{kolchinsky2023thermodynamic,fillEigenvalueBoundsConvergence1991}. Let $\uua$
indicate the $\evndx$-th right eigenvector of $A$ normalized as $\sum_i ({u^{\alpha}_i})^2/\pstBnb_i=1$,
and $\lambda_{\evndx}$ the corresponding real-valued eigenvalue ($\lambda_{1}=0$).   
The LR solution for the maximum harvesting rate and the optimal distribution is 
\begin{align}
\begin{aligned}
\Gmax&\approx \GGdB{\pstB}  + \beta \sum_{\evndx>1}\frac{\left|\Omega_\evndx\right|^{2}}{-\lambda_{\evndx}}\\
\popt&\approx  \quad \pstB\;\;\;+\beta\sum_{\evndx>1}\frac{\Omega_\evndx}{-\lambda_{\evndx}}\uua 
\end{aligned}
\label{eq:linear-resp-solution-2}
\end{align}
where $\Omega_\evndx = (\hvecbold +\invBeta {\RB}^T\ln\pstB)^T\uua/2$ quantifies the harvesting ``amplitude'' for mode $\evndx$. %

Eq.~\eqref{eq:linear-resp-solution-2} has a simple interpretation. In addition to the baseline harvesting rate $ \GGdB{\pstB} $, $\Gmax$ contains
contributions from the relaxation modes of $A$, with each mode weighed by its squared amplitude and
relaxation timescale $-1/\lambda_{\evndx}$. 
All else being equal, $\Gmax$ is large when slow modes have large harvesting amplitudes. 
The optimal $\popt$ shifts the baseline steady state $\pstB$ toward mode $\evndx$ in proportion to
that mode's harvesting amplitude and relaxation
timescale, thereby optimally balancing the tradeoff between harvesting and dissipation.

The \emph{Deterministic} (D) regime applies when the nonlinear information-theoretic term in~\eqref{eq:wmax-lin-split} is much smaller than the linear thermodynamic term. 
We can then ignore the former, 
turning~\eqref{eq:wmax-lin-split} into a simple linear optimization. This gives
the approximate solution
\begin{align}
\Gmax\approx \hvec_{\iopt}\quad\qquad \popti_{i} \approx \delta_{\iopt i}\,
\label{eq:DF}
\end{align}
where $\iopt =\arg\max_{i}\hvec_{i}$ is the optimal mesostate. %
This solution concentrates probability on a single mesostate, effectively ignoring the cost of maintaining this low-entropy distribution. 

The \emph{Near-Deterministic} (ND) regime lies between Linear Response and Deterministic ones.  
By perturbing $\popt$ around $\delta_{\iopt i}$, we can
decouple the
values of $p_{i}$ in the objective function~\eqref{eq:wmax-1}. The maximal harvesting rate and optimal distribution in this regime are then given by
\begin{align}
\begin{aligned}
    \Gmax &\approx \hvec_{\iopt }+\invBeta \sum_{i\ne \iopt }\RB_{i\iopt }(\ln \popti_{i}-1)\,,\\
\popti_{i}  &\approx \begin{cases} {\RB_{i\iopt }}/[{\beta(\hvec_{\iopt }-\hvec_{i})+\RB_{\iopt \iopt }}] &\qquad i\ne \iopt\\
 1-\sum_{i:i\ne \iopt }{\popti_{i}} &\qquad i=\iopt \end{cases}
\end{aligned}
\label{eq:wdPD}
\end{align}

The ND solution also has a simple
interpretation. It perturbs the Deterministic solution %
by shifting probability towards states with high transition rates away from the optimal state 
(large $\RB_{i\iopt }$) and small decreases in harvesting ($\hvec_{\iopt }-\hvec_{i}$).
This balances the benefit of harvesting against the cost of pumping probability against $\RB_{i\iopt }$.

\begin{figure}[t]
\includegraphics[width=0.8\columnwidth]{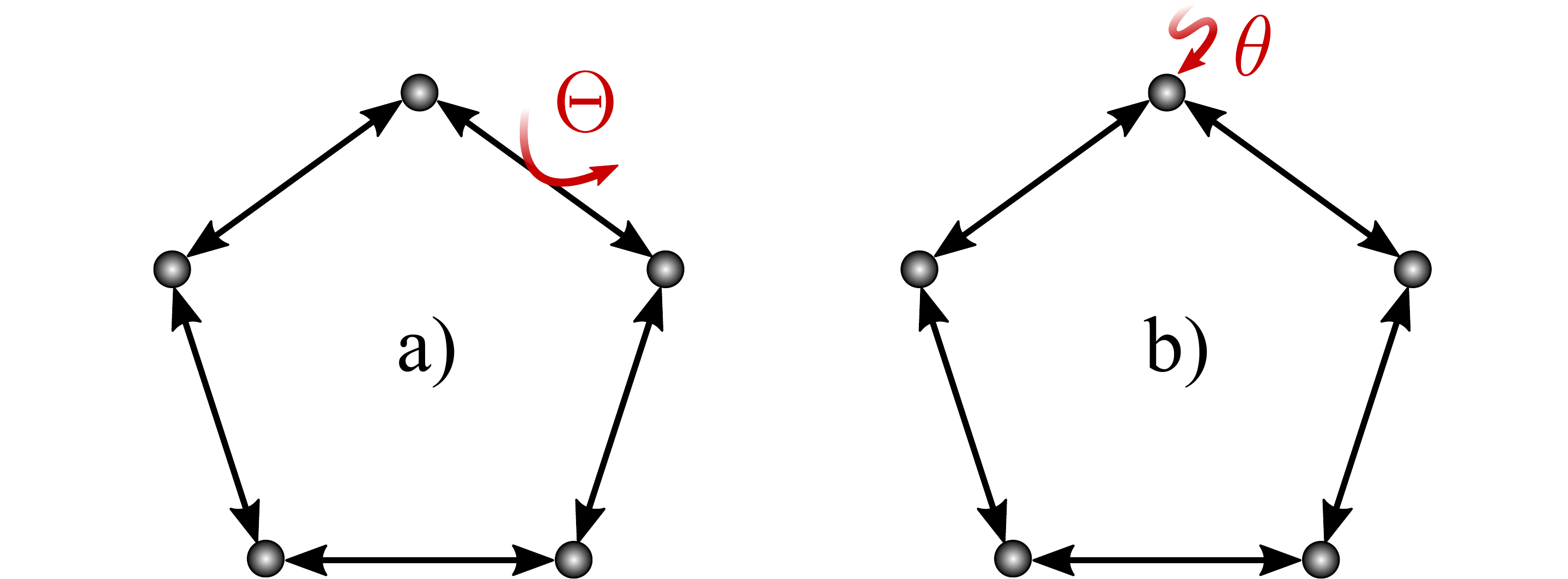}
\caption{\label{fig:two-rings} (a) Unicyclic system where free energy $\pwr$ is harvested by a single transition. (b) Unicyclic system where free energy per unit time $\pwrSt$ is harvested when the system is in a particular mesostate.
}
\end{figure}

\section{Example: unicyclic systems}

We illustrate our closed-form solutions using two simple models, both based on a unicyclic system with $n$ states. The baseline dynamics involve diffusion across a 1-dimensional
ring, with left and right jump rates set to 1. The baseline steady state is a uniform distribution, $\pstBnb_i = 1/n$, with no cyclic current. 
We assume a uniform free energy function, $f_i=0$ for all $i$.

We consider two different scenarios. 
In the first scenario,  shown schematically in Fig.~\ref{fig:two-rings}~(a), $\pwr$ of free energy is harvested each time the system carries out the transition $1\to2$, so
$$\gBi_{21} = -\gBi_{12}= \pwr\,,$$
and $\gBi_{ij}= \ggBi_i=0$ otherwise. 
This scenario may be interpreted as an idealized model of a biomolecular harvesting cycle, such as a transporter. In the second scenario, shown schematically in Fig.~\ref{fig:two-rings}~(b), free energy is harvested at a rate of $\pwrSt$ per unit time when the system is located in one particular mesostate $\iopt=1$, so
\begin{align}
 \ggBi_{1} =\pwrSt\,,
\end{align}
and  $\gBi_{ij}= \ggBi_i=0$ otherwise. This scenario may be interpreted as an idealized model of error correction or self-assembly, where free energy can only be harvested when the system is in some particular functional mesostate. 

For both scenarios, we evaluate the maximum harvesting rate $\Gmax$ and the 
optimal distributions achievable by adding any possible control to the system, without constraints. We report exact values found by numerical optimization of Eq.~\eqref{eq:wmax-lin-split}, as well as the LR, ND, and D approximations described above. To calculate the LR values, we exploit the fact that the baseline unicyclic rate matrix is a circulant matrix with a simple eigendecomposition~\cite{gray2006toeplitz}.  Full details of the derivations for the two scenarios are provided in  \smref{sec:rings1} and \smref{sec:rings2}, respectively.

\begin{figure}[t]
\includegraphics[width=1\columnwidth]{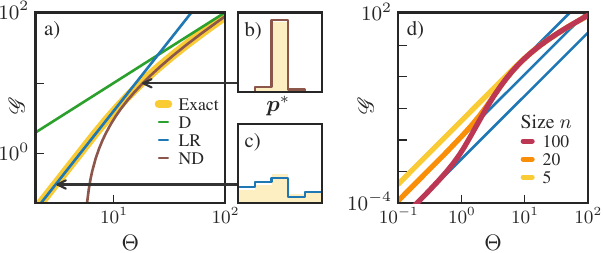}
\caption{\label{fig:main-ring} (a) Maximum harvesting rate $\Gmax$ for the unicyclic system from Fig.~\ref{fig:two-rings}~(a), as a function of supplied free energy $\pwr$. Exact value is found numerically, LR, D, and ND are calculated using approximations described in the text.
Exact and approximate optimal distributions $\popt$ in
ND (b) and LR (c) regimes are shown, with the optimal state $\iopt=1$ located in the middle of the histograms. 
(d) $\protect\Gmax$ and its LR approximation for different $\protect\pwr$ and system sizes $n$. 
}
\end{figure}
\begin{figure}[t]
\includegraphics[width=1\columnwidth]{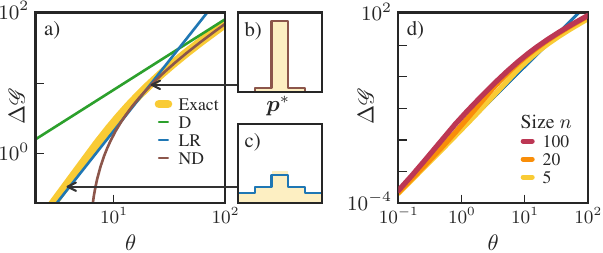}
\caption{\label{fig:state-ring} (a) Maximum harvesting rate $\Gmax$ for the unicyclic system from Fig.~\ref{fig:two-rings}~(b), as a function of supplied free energy rate $\pwrSt$. Exact value is found numerically, LR, D, and ND are calculated using approximations described in the text.
Exact and approximate optimal distributions $\popt$ in
ND (b) and LR (c) regimes are shown, with the optimal state $\iopt=1$ located in the middle of the histograms. 
(d) $\protect\Gmax$ and its LR approximation for different $\protect\pwrSt$ and system sizes $n$. 
}
\end{figure}

We first report results for the first scenario from Fig.~\ref{fig:two-rings}~(a), where  free energy is harvested during the transition $2\to 1$. 
Observe that baseline harvesting rate vanishes, $\GGdB{\pstB} =0$, since harvesting free energy requires a cyclic current. 
In Fig.~\ref{fig:main-ring}~(a), we plot the maximum harvesting rate $\Gmax$ and its approximations    as a function of the supplied free energy $\pwr$.  
For small $\pwr$, LR applies and the maximum harvesting rate is
\begin{align}
\Gmax \approx \beta \pwr^{2}{(n-1)}/{4n^{2}}.
\label{eq:lrcyc}
\end{align}
The optimal distribution in the LR regime, shown in Fig.~\ref{fig:main-ring}~(c), builds up in equal increments starting from $i=\iopt+1$ until the optimal state $\iopt=1$, after which it drops sharply. 
For large $\pwr$, 
the D regime is relevant and the
optimal distribution concentrates on the optimal state $\iopt =1$, so 
\begin{align}
    \Gmax \approx \pwr \,.
\end{align} 
At intermediate
$\pwr$, the ND regime applies, which gives
\begin{align}
\Gmax \approx \pwr-\invBeta\big\{2+\ln[2(\beta \pwr-1)(\beta\pwr-2)]\big\}.
\end{align}
The optimal distribution in the ND regime, shown in Fig.~\ref{fig:main-ring}~(b), allocates $\popti_{\iopt-1} = 1/(\beta \pwr -2)$, $\popti_{\iopt+1}=1/(2\beta \pwr - 2)$ and the rest to the optimal state $\popti_{\iopt}$.

Next, we consider the second scenario from Fig.~\ref{fig:two-rings}~(b), where free energy is harvested when the system is in the optimal mesostate $\iopt=1$.  Observe that the uniform baseline steady state assigns $1/n$ probability to the optimal state, thus in this scenario the baseline harvesting rate is $\GGdB{\pstB}=\pwrSt/n$.  To facilitate comparison with the first scenario, we focus on the {increase} of the maximum harvesting rate relative to baseline,
\begin{align}
    \Delta \Gmax := \Gmax  - \GGdB{\pstB}  = \Gmax  - \pwrSt/n . 
\end{align}
In Fig.~\ref{fig:state-ring}~(a), we show $\Delta \Gmax$ and its approximations as a function of the free energy supply rate $\pwrSt$. 
For small $\pwrSt$, LR applies and the maximum harvesting rate is
\begin{align}
\Delta \Gmax \approx  \beta \pwrSt^2 /48 \,.
\end{align}
The optimal distribution in the LR regime, shown in Fig.~\ref{fig:state-ring}~(c),   is symmetric about the optimal state $\iopt=1$. 
For large $\pwrSt$, 
the D regime is relevant and the
optimal distribution concentrates on the optimal state $\iopt =1$, so 
\begin{align}
    \Delta \Gmax \approx \pwrSt -\pwrSt/n \,.
\end{align} 
At intermediate
$\pwrSt$, the ND regime applies, giving
\begin{align}
\Delta \Gmax \approx \pwrSt-\pwrSt/n-2\invBeta \left[1+\ln\left(\beta \pwrSt-2\right)\right].
\end{align}
The optimal distribution in the ND regime, shown in Fig.~\ref{fig:state-ring}~(b),  allocates $\popti_{\iopt - 1}=\popti_{\iopt + 1} = 1/(\beta \pwrSt - 2)$ and the rest to $\popti_{\iopt}$.

There are some similarities among the two harvesting scenarios. For both scenarios, in the LR regime, the increase of the harvesting rate scales quadratically in the supplied free energy ($\pwr$ or $\pwrSt$) and linearly in inverse temperature $\beta$. 
This scaling reflects the fact that the optimal strategy has to balance  harvesting ($\pwr$ or $\pwrSt$ contributions) with the thermodynamic cost of maintaining a low entropy $\popt$ ($\beta$ contributions). 
In the ND and D regimes, $\Gmax$ scales linearly in the supplied free energy and loses its linear dependence on $\beta$.
Thus, outside of LR, the thermodynamic cost of maintaining a low entropy distribution has a minor effect on the optimal strategy. 

There are also important differences between the two scenarios. For the first scenario, the optimal strategy maintains an asymmetric $\popt$, thereby generating a net flux across the transition $2 \to 1$.  In the LR regime, the cost of maintaining this asymmetric distribution grows with the system size $n$, therefore the maximum harvesting rate in Eq.~\eqref{eq:lrcyc} scales as $\sim O(n^{-1})$. This is shown in Fig.~\ref{fig:main-ring}~(d), where we plot $\Gmax$ and its LR approximation at various $\pwr$ and $n$.  For the second scenario, the optimal strategy maintains a peaked but symmetric $\popt$. Remarkably, the cost of maintaining this distribution does not depend on system size $n$. This is shown in Fig.~\ref{fig:state-ring}~(d), where we plot $\Delta\Gmax$ at various $\pwrSt$ and $n$.

\mainsection{Discussion}

In this paper, we consider the problem of optimizing free energy harvesting in a nonequilibrium steady-state system. We demonstrate that this problem can be formulated as a constrained convex optimization problem, and we use this formulation to study optimal harvesting and efficiency  in the bacteriorhodopsin proton pump. We also solve the convex optimization problem in closed-form for three limiting regimes, as illustrated on two unicyclic models discussed above.

A key step in our analysis is to separate the dynamics of the system into separate contributions from fixed baseline processes and optimizable control processes.  We note that, in stochastic thermodynamics, the baseline/control separation has been previously    
used to study autonomous Maxwellian demons~\cite{shiraishi2015role,shiraishi2016measurement}, 
counterdiabatic driving~\cite{takahashi2020nonadiabatic}, and the cost of maintaining a nonequilibrium steady state~\cite{horowitz2017information,horowitz_minimum_2017}.

We derive a simplified bound on the maximum harvesting rate in~\eqref{eq:wmax-lin-split}, which is achieved in the limit of fast dissipation-less control. Interestingly, this expression involves a tradeoff between two terms, one information-theoretic and one thermodynamic. %
At first glance, this %
resembles information/free-energy tradeoffs characteristic of Maxwellian
demons and other ``information engines''
\citep{parrondo2015thermodynamics,sagawa2009minimal,abreu_thermodynamics_2012,cao_thermodynamics_2009,ito2013information,hartich_stochastic_2014,barato_efficiency_2014,sartori2014thermodynamic}. 
However, there are important differences. In a typical information engine, there is no external source of driving and information serves as \emph{fuel}, which
can be converted into $\invBeta \ln 2$ of thermodynamic free energy per bit. 
In our case, there is an external source of free energy that in some cases can be harvested more effectively by reducing the system's statistical entropy, e.g. by concentrating it on favorable states.
Here, a bit of information can increase the harvesting rate by a very large amount (much larger than $\invBeta \ln 2/\text{bit}$), and information acts more like a \emph{catalyst} than a fuel~\citep{barham1996dynamical,kolchinskySemanticInformationAutonomous2018}. Loosely speaking, this is similar to how information encoded in the sequence of a metabolic enzyme is not itself fuel, but rather allows metabolism to harvest a large amount of fuel from elsewhere. 

We finish by mentioning some connections to previous work and future directions.
First, our approach may be related to prior work on optimizing power output and free energy transduction in steady-state molecular machines~\cite{pietzonkaUniversalBoundEfficiency2016,brown2017allocating,brown2018allocating,pietzonkaUniversalTradeOffPower2018,wagoner2019mechanisms,brown_sivak_2020,schmiedl2008efficiency}. Here we consider the general problem of optimizing a set of control processes, given a fixed baseline and possible additional constraints on kinetics, topology, and thermodynamics. Previous work does not make the baseline/control distinction; instead, it is mostly concerned with the problem of optimizing system performance with respect to a small set of specific parameters or observables of interest, such as the location of free energy barriers~\cite{schmiedl2008efficiency,brown2017allocating,brown2018allocating,wagoner2019mechanisms}, efficiency~\cite{schmiedl2008efficiency}, and the size of  fluctuations~\cite{pietzonkaUniversalBoundEfficiency2016,pietzonkaUniversalTradeOffPower2018}.

There is also an interesting relation between our work and flux balance analysis (FBA)~\cite{beard2002energy,kauffman2003advances,kschischo2010gentle}. The goal of FBA is to identify deterministic fluxes in biological metabolic systems that optimize biomass production, or other similar metrics of performance. 
This can be formulated as a linear program, which may include linear constraints that enforce thermodynamically favored reaction directions~\cite{kschischo2010gentle} (interestingly, in Ref.~\cite{fleming2012variational}, the authors propose a version of FBA that also accounts for the entropy production rate). Our setting and optimization are different from FBA and its variants. We seek to optimize free energy harvesting at the stochastic level, and our objective involves nonlinear information-theoretic contributions to free energy. In addition, our optimization involves both the steady-state distribution $\pp$ and fluxes $\JJ$, which allows us to optimize harvesting due to to internal transitions within coarse-grained mesostates, as in Fig.~\ref{fig:two-rings}~(b). Nonetheless, investigating the relationship between our approach and FBA is an interesting direction for future work.

Another interesting direction for future work is to consider stochastic fluctuations of free energy harvesting. In particular, the thermodynamic uncertainty relation may be used to study tradeoffs between the entropy production rate, the average harvesting rate (the quantity $\GGdBCbase$  considered here), and the fluctuations in the amount of harvested free energy~\cite{gingrich2016dissipation,pietzonkaUniversalTradeOffPower2018}. For biomolecular systems, large fluctuations in harvesting can lead to starvation, 
suggesting that minimizing fluctuations may be of significant biological importance.

Finally, an interesting direction for future work is to consider free energy harvesting in a system embedded in a fluctuating environment. For example, one may imagine a harvesting system in an environment with fluctuating sugar sources, or with a fluctuating amount of available light. In this setting, it is  natural to optimize the harvesting rate under the topological constraint that control fluxes cannot directly change the state of the environment, for instance using bipartite models of Markovian dynamics~\cite{horowitz2014thermodynamics}. 
It would be interesting to investigate how, under the optimal harvesting strategy, the information flow %
from the environment to the system varies with the abundance of free energy and complexity of the environment.

\vspace{10pt}

\begin{acknowledgments}

We thank members of the Complex Systems Lab, B. Corominas-Murtra, L. Seoane, D. Wolpert, and D. Sowinski for useful discussions. A.K. also thanks Sosuke Ito for support and encouragement. This project has received funding from the European Union’s Horizon 2020 research and innovation programme under the Marie Sk\l{}odowska-Curie Grant Agreement No. 101068029. J.P. was supported by the Mar\'ia de Maezt\'u fellowship MDM-2014-0370-17-2 and Grant No. 62417 from the John Templeton Foundation. The opinions expressed in this publication are those of the authors and do not necessarily reflect the views of the John Templeton Foundation.

\end{acknowledgments}

\vfill
\bibliographystyle{ieeetr}
\bibliography{main}

\clearpage

\newcommand\newstuff[1]{#1}

\ifarxiv 
\let\addcontentsline\savedaddcontentsline

\newcommand\figurefontcmd{\fontsize{10pt}{12bp}\selectfont}

\makeatletter
\renewcommand\section{\@startsection{section}{1}{\z@}%
                                  {-3.5ex \@plus -1ex \@minus -.2ex}%
                                  {2.3ex \@plus.2ex}%
                                  {\fontsize{12pt}{12bp}\selectfont\bfseries\centering}}
\renewcommand\subsection{\@startsection{subsection}{2}{\z@}%
                                  {-3.5ex \@plus -1ex \@minus -.2ex}%
                                  {2.3ex \@plus.2ex}%
                                  {\fontsize{11.5pt}{12bp}\selectfont\bfseries\centering}}

\renewcommand\subsubsection{\@startsection{subsubsection}{3}{\z@}%
                                  {-3.5ex \@plus -1ex \@minus -.2ex}%
                                  {2.3ex \@plus.2ex}%
                                  {\fontsize{11.5pt}{12bp}\selectfont\em\centering}}
\makeatother

\global\long\def\psiM{{ \psi^{\text{M}}}}
\global\long\def\psiMmin{{\phi_{\min}}}


\global\long\def\deltaWmaxLR{ {\mathscr{F}_{\mathrm{LR}}}}
\global\long\def\deltaWmaxDF{ {\mathscr{F}_{\mathrm{D}}}}
\global\long\def\deltaWmaxFE{ {\mathscr{F}_{\mathrm{ND}}}}
\global\long\def\deltaWmax{ {\mathscr{F}}}
\global\long\def\psiLR{ {\phi^{LR}}}
\global\long\def\psiM{ {\phi^{M}}}
\global\long\def\psiLRvec{\boldsymbol{\phi}^{\text{LR}}}
\global\long\def\hvecbold{\boldsymbol{\phi}}
\global\long\def\vv{\boldsymbol{v}}
\newcommand\rateparam{\kappa}

\global\long\def\pstBCa{\pstBC({\rateparam})}

\global\long\def\zz{\boldsymbol{z}}
\global\long\def\ppLR{\popt_{\mathrm{LR}}}
\global\long\def\ppM{\popt_{\mathrm{D}}}
\global\long\def\ppFE{\popt_{\mathrm{ND}}}
\sloppy
\allowdisplaybreaks
\raggedbottom

\global\long\def\gammaConst{\gamma}%

\onecolumngrid
\setcounter{equation}{0}

\renewcommand{\theequation}{S\arabic{section}.\arabic{equation}}
\renewcommand{\thefigure}{S\arabic{figure}}



\setcounter{section}{0}

\begin{center}

{\fontsize{15pt}{22bp}\bfseries Supplemental Material:}

\vspace{5pt}
{\fontsize{15pt}{22bp}\bfseries \papertitle }

\vspace{15pt}

{\fontsize{14pt}{15} Jordi Pi\~nero, Ricard Solé, and Artemy Kolchinsky }


\end{center}

\fontsize{11pt}{14.5bp}\selectfont
\tableofcontents{}

\fontsize{11pt}{22bp}\selectfont 

\clearpage

\section{Derivations of main results}
\global\long\def\Lobj{\mathscr{L}}%
\global\long\def\qq{\bm{q}}%


\subsection{Derivation of~\eqref{eq:gg0} from LDB and steady-state assumption}
\label{sec:derivation-of-3}

Here we derive~\eqref{eq:gg0} in the main text, which reads as 
\begin{align}
\GGdBCbase= \FFdB{\pstBC} + \GGdB{\pstBC} - \epr(\JJc)/\beta .
\label{eq:appgg0}
\end{align}
To derive this result,
\begin{align}
\FFdC{\pp}+\GGdC{\pp} &= \sum_{i,j}p_{i}\RC_{ji}(f_{j}+\invBeta\ln p_{j})+ \sum_{i,j}p_{i}\RC_{ji}\gCi_{ji} %
\nonumber\\
&= \invBeta \sum_{i,j}p_{i}\RC_{ji}[\ln p_{j}+\beta(f_{j}+ \gCi_{ji})] \label{eq:appzz11}.
\end{align}
The first line follows from definitions made in the main text. %
Since $\RC$ is a rate matrix, $\sum_{i,j} p_i \RC_{ji} a_i =\sum_i p_i a_i \sum_{j} \RC_{ji}=0$ for any $\bm{a}$. This allows us to further rewrite \eqref{eq:appzz11} as 
\begin{align}
\FFdC{\pp}+\GGdC{\pp}&=\invBeta \sum_{i,j}p_{i}\RC_{ji}[\ln p_{j}-\ln p_i+\beta(-f_i + f_{j}+ \gCi_{ji})]\nonumber \\ %
&= \invBeta \sum_{i,j}p_{i}\RC_{ji}[\ln p_{j}-\ln p_i-\ln (\RC_{ji}/\RC_{ij})] \label{eq:ldb-ineq-2}
\end{align}
In the second line, we plugged in the condition of LDB, %
as expressed in~\eqref{eq:ldb} in the main text:
\begin{equation}
    \ln (\RC_{ji}/\RC_{ij})= \beta( f_{i}-f_{j} - \gCi_{ji})\qquad\text{for}\quad \RC_{ji}>0.\label{eq:appldb}
\end{equation} 
Using the definition of one-way fluxes due to control given a generic distribution $\pp$, $\JJcji = p_i \RC_{ji}$, and the Schnackenberg expression of EPR $\epr$, we rewrite~\eqref{eq:ldb-ineq-2} as 
\begin{align}
\FFdC{\pp}+\GGdC{\pp}= -\invBeta \sum_{i,j}p_{i}\RC_{ji}\ln \frac{p_j\RC_{ji}}{p_i\RC_{ij}}=-\invBeta\sum_{i\neq j} \JJcji\ln \frac{\JJcji}{\JJcij}=-\invBeta\epr(\JJc) \label{eq:ldb-ineq-3}\,.
\end{align}
In the main text, this equality is applied at the steady-state distribution $\pstBC$. However,  Eq.~\eqref{eq:ldb-ineq-3} is valid for all distributions $\pp$ as long as~\eqref{eq:appldb} holds.

To complete our derivation of \eqref{eq:appgg0}, %
recall that $\RBC\pstBC=(\RB+\RC)\pstBC=0$ by definition of the steady-state $\pstBC$. At the same time, from the definition of $\FFdB{\pp}$ in~\eqref{eq:deltaF}, we have
\begin{align}
\FFdB{\pstBC} + \FFdC{\pstBC} = \sum_j (f_j+\invBeta\ln \pi_j) \sum_i (\RB_{ji}+\RC_{ji})\pi_i=0\qquad\implies\qquad \FFdB{\pstBC} = -\FFdC{\pstBC}.\label{eq:free-energies-at-steady-state}
\end{align}
Combining the definition of $\GGdBCbase$ with \eqref{eq:ldb-ineq-3} and \eqref{eq:free-energies-at-steady-state}   gives  Eq.~\eqref{eq:appgg0},
\begin{align}
\GGdBCbase= \GGdB{\pstBC} +\GGdC{\pstBC}=\GGdB{\pstBC}-\left[\FFdC{\pstBC}+\invBeta\epr(\JJc)\right] 
=\FFdB{\pstBC}+\GGdB{\pstBC}-\invBeta \epr(\JJc) \,.
\end{align}

Finally, note that the entropy production rate $\epr$ is always nonnegative: %
\begin{align}
\epr(\JJc)=\frac{1}{2} \sum_{i \ne j}(p_{j}\RC_{ij}-p_{i}\RC_{ji})\ln \frac{p_j\RC_{ij}}{ p_{i}\RC_{ji}}\ge 0, \label{eq:ldb-ineq-99}
\end{align}
where the last inequality follows because the terms $p_{j}\RC_{ij}-p_{i}\RC_{ji}$ and $\ln \frac{p_j\RC_{ij}}{ p_{i}\RC_{ji}}$ have the same sign.  This can be seen as a statement of the Second Law.

Let us briefly comment on the physical meaning of our derivation. First, we note that LDB holds when the control dynamics exhibit a separation of timescales, with fast equilibration within each mesostate and slower transitions between mesostates~\cite{esposito2012stochastic}. 

Second, recall our assumption that the control processes do not interact directly with the external source of free energy, but only with the internal reservoir and heat bath. This assumption is formalized in the particular form of the LDB condition~\eqref{eq:appldb}.  It means that for the control processes, the combined ``system+internal reservoir'' may be treated as a single system
coupled only to a heat bath at inverse temperature $\beta$.  Hence, the rate of entropy production due to control is simply $\beta$ times the decrease of the combined free energy of the system and internal reservoir, as  in \eqref{eq:ldb-ineq-3}. Moreover, owing to Second Law, their combined
free energy cannot increase under control dynamics~\cite{horowitz2013imitating,brown_sivak_2020},
\begin{align}
\FFdC{\pp}+\GGdC{\pp}\le0 \qquad \text{for all}\quad \pp.\label{eq:secondlaw}
\end{align}

\subsection{Concavity of constrained optimization~\eqref{eq:Gmax00}}
\label{sec:optimization-properties-Gmax}

\global\long\def\Oobj{\mathscr{O}}%
\global\long\def\YY{\boldsymbol{Y}}
\global\long\def\YYji{{Y}_{ji}}
\global\long\def\YYij{{Y}_{ij}}

Consider the objective $\FFdB{\pp}+\GGdB{\pp}-\epr(\JJ)/\beta$ in~\eqref{eq:Gmax00} in the main text. 
In Sec.~\ref{sec:optimization-properties-wmax} below, we show that the term $\FFdB{\pp}+\GGdB{\pp}$ is concave in $\pp$. Here we show that $\epr(\JJ)$ is convex in $\JJ$. This shows that the overall objective is concave in the pair $(\pp,\JJ)$.  
Consider any pair of feasible fluxes %
$\JJ\neq \YY$ and $\lambda \in (0,1)$ and let $\JJ(\lambda)=\lambda\JJ+(1-\lambda)\YY$ indicate their convex mixture.
 To prove convexity, we write
\begin{align}
 \lambda \epr(\JJ)+(1-\lambda)\epr(\YY)&=
 \lambda \sum_{i \ne j}\JJji\ln \frac{\JJji}{\JJij}+(1-\lambda) \sum_{i \ne j} \YYji \ln \frac{\YYji}{\YYij}\nonumber \\
 & \ge \sum_{i \ne j}\JJji (\lambda)\ln \frac{\JJji (\lambda)}{\JJij(\lambda)} = \epr(\JJ(\lambda)).\label{eq:convexity-of-epr}
\end{align}
Here we used the log-sum inequality~\cite[Thm.~2.7.1]{cover_elements_2006},
\begin{align}
\lambda a\ln\frac{a}{a'}+(1-\lambda)b\ln\frac{b}{b'} \ge[\lambda a+(1-\lambda)b]\ln\frac{\lambda a+(1-\lambda)b}{\lambda a'+(1-\lambda)b'} \qquad\qquad\text{for all } a,a',b,b'\ge 0\,.
\label{eq:log-sum-ineq}
\end{align}

\subsection{Properties of the unconstrained optimization~\eqref{eq:wmax-1}}
\label{sec:optimization-properties-wmax}

\subsubsection{Concavity and existence of maximizer}

Consider the objective in \eqref{eq:wmax-1}, as equivalently written in \eqref{eq:wmax-lin-split}:
\begin{align}
    \label{eq:app4325}
\Lobj(\pp)=\FFdB{\pp}+\GGdB{\pp}=-\derivSRB{\pp}+\sum_{i}p_{i}\phi_{i},
\end{align}
where $\derivSRB{\pp}=-\sum_{i,j}p_{i}\RB_{ji}\ln p_{j}$
is the rate of increase of the Shannon entropy of distribution $\pp$ under the 
rate matrix $\RB$. Here we show that $\Lobj(\pp)$ is  concave, thus \eqref{eq:wmax-1} involves the maximization of a concave function. This 
is equivalent to the minimization of a convex function, making it convex optimization problem which can be solved
using standard numerical techniques. 
We also show the existence of a maximizer 
\[
\popt\in \operatorname*{arg\,max}_{\pp}\Lobj(\pp) \,.
\]

\emph{Existence of $\popt$}: The feasible set is
compact ($n$-dimensional probability simplex) and the objective is continuous, so the maximizer exists by the extreme value theorem. Also, the maximum value is finite (see \eqref{eq:upper-bound-deltaWmax-delta}).

\emph{Concavity}:  Consider any pair of distributions $\pp\ne\qq$
and $\lambda\in(0,1)$, and let $\pp(\lambda)=\lambda\pp+(1-\lambda)\qq $ indicate their convex mixture. We will show that the objective is concave:
\begin{equation}
\mathscr{L}(\pp(\lambda))\ge \lambda\mathscr{L}(\pp)+(1-\lambda)\mathscr{L}(\qq).\label{eq:app44423}
\end{equation}
Observe that $\sum_i p_i \phi_i$ is linear in $\pp$, thus we can use \eqref{eq:app4325} to rearrange  \eqref{eq:app44423} as
\begin{equation}
\derivSRB{\pp(\lambda)}\le \lambda\derivSRB{\pp}+(1-\lambda)\derivSRB{\qq}.\label{eq:app44423v4}
\end{equation}
To prove \eqref{eq:app44423v4}, we first rewrite the rate of decrease of the Shannon entropy as
\begin{equation}
\derivSRB{\pp}=-\sum_{i,j}p_{i}\RB_{ji}\ln p_{j}=-\sum_{i,j}p_{i}\RB_{ji}\ln\frac{p_{j}}{p_{i}}=-\sum_{i\ne j}p_{i}\RB_{ji}\ln\frac{p_{j}}{p_{i}}=\sum_{i\ne j}p_{i}\RB_{ji}\ln\frac{p_{i}}{p_{j}}.\label{eq:app4444}
\end{equation}
Here we first used that $\sum_{i,j} p_i \RB_{ji} a_i=\sum_i a_i p_i \sum_j \RB_{ji}=0$ for any $\bm{a}$, then used that the diagonal terms of the sum ($i=j$) vanish. Next, we use \eqref{eq:app4444} to prove \eqref{eq:app44423v4} as
\begin{align}
\derivSRB{\pp(\lambda)}=\sum_{i\ne j}p_{i}(\lambda)\RB_{ji}\ln\frac{p_{i}(\lambda)}{p_{j}(\lambda)}&=\sum_{i\ne j}p_{i}(\lambda)\RB_{ji}\ln\frac{p_{i}(\lambda)\RB_{ji}}{p_{j}(\lambda)\RB_{ji}}\nonumber \\
&\le \sum_{i\ne j}\left[\lambda p_{i}\RB_{ji}\ln\frac{p_{i}\RB_{ji}}{p_{j}\RB_{ji}}+(1-\lambda)q_{i}\RB_{ji}\ln\frac{q_{i}\RB_{ji}}{q_{j}\RB_{ji}}\right] \label{eq:bdg33}\\
&= \lambda\sum_{i\ne j}p_{i}\RB_{ji}\ln\frac{p_{j}}{p_{i}}+(1-\lambda)\sum_{i\ne j}q_{i}\RB_{ji}\ln\frac{q_{j}}{q_{i}}  \nonumber\\
&=\lambda\derivSRB{\pp}+(1-\lambda)\derivSRB{\qq}
\,. \nonumber 
\end{align}
The second line follows  from the log-sum inequality~\eqref{eq:log-sum-ineq}.

\subsubsection{Strict concavity and uniqueness of maximizer under irreducibility assumption}

We can prove further results under the assumption that the baseline rate matrix $\RB$ is irreducible (i.e., there is a path of non-zero transitions connecting any two states) and has a steady-state with full support ($\pstBCnb_i>0$ for all $i$). 
In graph theoretic
terms, this means that the graph of allowed transitions under $\RB$
is strongly connected.  Specifically, we prove that our objective is strictly concave, and that the optimizer $\popt$ is unique and has full support.

\emph{Strict concavity}:
The log-sum inequality \eqref{eq:log-sum-ineq} is strict when $a/a'\ne b/b'$~\cite[Thm.~2.7.1]{cover_elements_2006}. In our case, \eqref{eq:bdg33} 
is strict for any pair of states $j\ne i$
that have $\RB_{ji}>0$ and
\begin{align}
    \label{eq:app353}
\frac{p_{i}\RB_{ji}}{p_{j}\RB_{ji}}\ne\frac{q_{i}\RB_{ji}}{q_{j}\RB_{ji}}.
\end{align}
Such a pair of states must exist, as we now prove by contradiction. 
Suppose to the contrary that \eqref{eq:app353} is an equality 
for all $i\ne j$ where $\RB_{ji}>0$. Consider a walk on the graph
defined by the allowed transitions in $\RB$ --- that is, a sequence
of  states $i_{0},i_{1},i_{2},\dots$ such that $\RB_{i_{k+1}i_{k}}>0$. We would then have that $p_{i_{0}}/p_{i_{1}}=q_{i_{0}}/q_{i_{1}},p_{i_{1}}/p_{i_{2}}=q_{i_{1}}/q_{i_{2}},\dots$,
hence also (by multiplication) that $p_{i_{0}}/p_{i_{2}}=q_{i_{0}}/q_{i_{2}},\dots$.
Now, since $\RB$ is irreducible, any state can be reached from any
other, implying that $p_{i}/p_{j}=q_{i}/q_{j}$ for all $i\ne j$, hence $\pp \propto \qq$. 
Since probabilities are nonnegative and normalized to sum to 1, this
can only hold when $\pp=\qq$, contradicting our assumption that $\pp\ne\qq$ in  \eqref{eq:bdg33}. 

To summarize, there must be some $i \ne j$ such that the corresponding term in \eqref{eq:bdg33}  is a strict inequality, therefore
\begin{equation}
\mathscr{L}(\pp(\lambda))>\lambda\mathscr{L}(\pp)+(1-\lambda)\mathscr{L}(\qq).\label{eq:app44423v2}
\end{equation}

\emph{Uniqueness of maximizer}: Suppose that
 there are two different maximizers, $\mathscr{L}(\popt_{(1)})=\mathscr{L}(\popt_{(2)})=\Gmax$.
Then, \eqref{eq:app44423v2}  implies that the convex mixture $\lambda\popt_{(1)}+(1-\lambda)\popt_{(2)}$ would achieve a larger value than $\Gmax$, leading to a contradiction.

\emph{Full support of $\popt$}:
Suppose that the maximizer $\popt$ did not have full support, so
that some states have 0 probability. Then, there must be some pair  $i\ne j$ such that $p_{i}^{*}>0$ and $p_{j}^{*}=0$ and
$\RB_{ji}>0$ (otherwise the graph of allowed transitions would not be
strongly connected). But for this pair, $p_{i}^{*}\RB_{ji}\ln({p_{j}^{*}}/{p_{i}^{*}})=-\infty$. Plugging into \eqref{eq:app4444} implies that $\Lobj(\popt)=-\infty$ and contradicting the fact that
$\popt$ is a maximizer. Hence, $\popt$ must have full support.

\subsection{Derivation and achievability of unconstrained optimization~\eqref{eq:wmax-1}}
\label{sec:achievability}

In this section, we derive the unconstrained optimization~\eqref{eq:wmax-1} from the constrained optimization~\eqref{eq:Gmax00}.  

First, observe that  that
the maximum in \eqref{eq:Gmax00} is always less than the maximum in \eqref{eq:wmax-1}:
\begin{gather*}
\Gmax:=  \sup_{(\pp, \JJ) \in \feasibleset  : \incmatrix \JJ =-\RB \pp } \, \FFdB{\pp} + \GGdB{\pp} - \epr(\JJ)/\beta  \quad \le \quad  
\Gmax^\prime :=\max_{\pp}\,\FFdB{\pp}+\GGdB{\pp}
\,.
\end{gather*}
This follows from nonnegativity of $\epr(\JJ)$ and the fact that \eqref{eq:wmax-1} has less constraints than \eqref{eq:Gmax00}. 

We will show that $\Gmax=\Gmax^\prime$ as long as the feasible set $\feasibleset$ is not too constrained. In addition to the basic normalization and nonnegativity constraints on $\pp$ and $\JJ$, we allow $\feasibleset$ to encode a set of topological constraints like
\begin{align}
    \JJji = 0 \qquad\text{if }G_{ji}=0\,,
    \label{eq:appGmxdef}
\end{align}
for $i \ne j$, where $G$ is a symmetric matrix that determines the topology of the control dynamics ($G_{ji}=G_{ij}=1$ when the transition $i\leftrightarrow j$ is allowed, and otherwise $G_{ji}=G_{ij}=0$). We assume that $G$ is connected and contains all $n$ states. Of course, we may have $G_{ji}=1$ for all $i\ne j$, in which case no topological constraints are imposed. We assume that $\Lambda$ includes no other constraints.

Given these assumptions, we construct a sequence of control process $(\RC(\kappa),\gC)$ which satisfy LDB such that: 
\begin{enumerate}
    \item The combined steady state $\pstBC(\kappa)$ of $\RB+\RC(\kappa)$ and  control fluxes $\JJcij(\kappa) = \pstBCnb_j(\kappa) \RC_{ij}(\kappa)$ belong to $\feasibleset$ for all $\kappa$.
    \item The combined steady state  obeys $\lim_{\kappa \to \infty}\pstBC(\kappa) = \popt$, where $\popt$ is the optimizer of \eqref{eq:wmax-1}.
    \item The steady-state EPR  vanishes $\lim_{\kappa \to 0} \epr(\kappa)= 0$, therefore the  steady-state harvesting rate  obeys $$\Gmax \ge \lim_{\kappa \to \infty} \GGdBCbase(\kappa)=\Gmax^\prime.$$
\end{enumerate}

Our proof technique is related to the idea sketched out in Ref.~\citep{horowitz_minimum_2017}, which studied the minimal cost of maintaining a nonequilibrium steady state. It is also related to constructions from the literature on counterdiabatic driving \cite{ilker2022shortcuts}.

\subsubsection{Construction of optimal control}

\newcommand\baseRC{B}
\newcommand{\RCinv}{\baseRC^+}

We define a sequence of control processes parameterized by $\kappa >0$. 
For each control process, the free energy exchanges with the internal reservoir are set to 
\begin{align}
\gCi_{ji}=f_{i}-f_{j}+\invBeta (\ln p_{i}^{*}-\ln p_{j}^{*}).
\end{align}
The control rate matrix $\RC(\kappa)$ is defined by scaling a given rate matrix $\baseRC$
\begin{align}
    \RC_{ji}(\kappa) := \rateparam \baseRC_{ji} \qquad \qquad \baseRC_{ji} := \frac{G_{ji}}{1+e^{\beta( f_{i}-f_{j} - \gCi_{ji})}}=\rateparam \frac{G_{ji}}{1+p_j^*/ p_i^*}\,.\label{eq:rcparameterization}
\end{align}
Here $\rateparam \ge 0$ is an overall rate constant and $G$ is the matrix discussed in \eqref{eq:appGmxdef}.  
The particular choice of the topology encoded in $G$ is arbitrary, given our assumption that it is connected and contains all $n$ states which guarantees that  $\RC$ is irreducible. %
It can be verified that the rate matrix $\RC(\kappa)$ defined in \eqref{eq:rcparameterization} obeys LDB~\eqref{eq:ldb}. It also satisfies 
detailed
balance (DB) for the distribution $\popt$,
\begin{align}
\RC_{ji}(\rateparam)\popti_{i}=\RC_{ij}(\rateparam)\popti_{j}, \label{eq:achievability-DB-control}
\end{align}
This implies that $\popt$ is the unique steady-state distribution, $\RC(\rateparam)\popt=0$.

We note that there are various other types of rate matrices that can be used in the construction, as long as they satisfy LDB and DB. However, the parameterization used in \eqref{eq:rcparameterization} is common in the literature~\cite{ferreira2006model}.

\subsubsection{Achievability of the steady state $\popt$}

We show that $\pstBC(\kappa)$, the steady-state of the combined rate matrix of $\RBC({\rateparam}):=\RB +\RC(\rateparam)$, approaches $\popt$ in the limit of fast control (large $\rateparam$). 
Since $[\RB +\RC(\rateparam)]\pstBCa=0$ and $\RC(\kappa)\popt=0$, we have %
\begin{align}
\frac{1}{\rateparam}\RB\pstBCa=-\frac{1}{\kappa}\RC(\kappa)\pstBCa=\frac{1}{\kappa}\RC(\kappa)\left(\popt-\pstBCa\right)=\baseRC(\popt-\pstBCa). 
\end{align}
Rearranging and applying the Moore-Penrose inverse $\RCinv$ to both sides gives
\begin{align}
\frac{1}{\rateparam}\RCinv \RB \pstBCa =\RCinv\baseRC\left(\popt-\pstBCa\right)\,.
\label{eq:reach-proof-matrix-eq1}
\end{align}
Since $\baseRC$ is irreducible by construction, $\RCinv\baseRC=I-\popt\boldsymbol{1}^T$. Plugging into \eqref{eq:reach-proof-matrix-eq1}, we obtain
\begin{align}
\frac{1}{\rateparam}\RCinv \RB \pstBCa=(I-\popt\boldsymbol{1}^T)(\popt-\pstBCa)=\popt-\pstBCa \label{eq:vdsff3},
\end{align}
which follows from $\boldsymbol{1}^T\popt=\boldsymbol{1}^T \pstBCa=1$. Taking norms gives
\begin{align}
    \left\Vert \popt-\pstBCa \right\Vert=\frac{1}{\rateparam}\Vert \RCinv \RB \pstBCa \Vert\leq \frac{1}{\rateparam}\Vert \RCinv \RB\Vert  \Vert \pstBCa \Vert\leq \frac{1}{\rateparam}\Vert \RCinv \RB\Vert , \label{eq:achievability-bounded-norm}
\end{align}
where in the last step we used that the norm of any probability distribution is bounded by 1. This shows that $\lim_{\kappa \to \infty}\left\Vert \popt-\pstBCa \right\Vert = 0$, meaning that the combined steady-state converges to $\popt$.

\subsubsection{EPR vanishes}

\newcommand\pstBCai{\pstBCnb_i(\rateparam)}
\newcommand\pstBCaj{\pstBCnb_j(\rateparam)}

The steady-state EPR incurred by rate matrix $\RC(\kappa)=\kappa \baseRC$ is 
\begin{align}
\epr(\kappa) = \frac{\rateparam}{2 } \sum_{i \ne j}(\pstBCaj\baseRC_{ij}-\pstBCai\baseRC_{ji})\ln \frac{\pstBCaj\baseRC_{ij}}{ \pstBCai\baseRC_{ji}}\,.\label{eq:bdfd3}
\end{align}
Next, we use \eqref{eq:vdsff3} to write $\popt=\pstBCa +\frac{1}{\rateparam}\RCinv\RB\pstBCa$. 
Using the DB condition for $\popt$ with respect to $\RC(\kappa)=\kappa\baseRC$ obtained in \eqref{eq:achievability-DB-control}, we rearrange terms to find:
\begin{align}
\pstBCai \baseRC_{ji}-\pstBCaj \baseRC_{ij}=\frac{1}{\rateparam}\left[\baseRC_{ij}\left(\RCinv\RB\pstBCa\right)_j-\baseRC_{ji}\left(\RCinv\RB\pstBCa\right)_i\right] \label{eq:fluxes-with-pstBCa}
\end{align}
Plugging into \eqref{eq:bdfd3} gives
\begin{align}
\epr(\kappa)=\frac{1}{2}\sum_{i\neq j}\left[\baseRC_{ji}\left(\RCinv\RB\pstBCa\right)_i-\baseRC_{ij}\left(\RCinv\RB\pstBCa\right)_j\right]\ln \frac{\pstBCaj \baseRC_{ij}}{ \pstBCai \baseRC_{ji}}.\label{eq:bfdg2}
\end{align}
Taking the limit $\rateparam\to \infty$ shows that EPR vanishes:
\begin{align*}
\lim_{\rateparam\to \infty} \epr(\kappa)= \frac{1}{2}\sum_{i\neq j}\left[\baseRC_{ji}\left(\RCinv\RB\popt\right)_i-\baseRC_{ij}\left(\RCinv\RB\popt\right)_j\right]\ln \frac{p^*_j \RC_{ij}}{ p^*_{i} \RC_{ji}}=0,
\end{align*}
where we used $\lim_{\rateparam\to\infty}\pstBCa=\popt$ and the DB condition~\eqref{eq:achievability-DB-control} in the logarithmic factor.

\clearpage
\section{Bacteriorhodopsin model}
\label{sec:br}

\subsection{Details of model}
\label{sec:brmodeldetails}
\newstuff{

Here we provide thermodynamic and kinetic details of the bacteriorhodopsin model analyzed in the main text.

The thermodynamic parameters are taken from Ref.~\cite{varo1991thermodynamics}, which reports \emph{in vitro} measurements of internal free energies at $293^\circ\,\text{K}=20^\circ \,\text{C}$. Based on Fig.~7 in that paper, we use the following internal free energies for the 6 cycle states: %
\begin{align}
\bm{f} &\equiv && (f_K,&& f_L, &&f_{M_1}, &&f_{M_2}, &&f_N, &&f_O &&\!\!\!\!\!\!) \nonumber\\
&= && (34.41, && 27.96, && 29.57, && 13.98, && 13.17, && 14.78&&\!\!\!\!\!\!)\;\;\;\text{kJ mol}^{-1}\nonumber \\
&=&&(5.71\times 10^{-20}, &&\!\!\!4.64\times 10^{-20}, &&\!\!\!4.91\times 10^{-20}, && \!\!\!2.32\times 10^{-20},&&\!\!\! 2.19\times 10^{-20},&& \!\!\!2.45\times 10^{-20}&&\!\!\!\!\!\!) \;\;\;\text{joules}
\label{eq:appfdef}
\end{align}
All values are referenced from the ground state, in other words the zero point refers to the internal free energy of the ground state $\mathrm{bR}$~\cite{varo1991thermodynamics}. However, since we coarse-grain the transitions $O\to \mathrm{bR}$ and $\mathrm{bR}\to K$ into a single transition $O\to K$, we do not include this ground state in our model.  See Fig.~\ref{fig:br-cg} for an illustration of the coarse-graining.

\begin{figure}[h]
\includegraphics[width=.6\columnwidth]{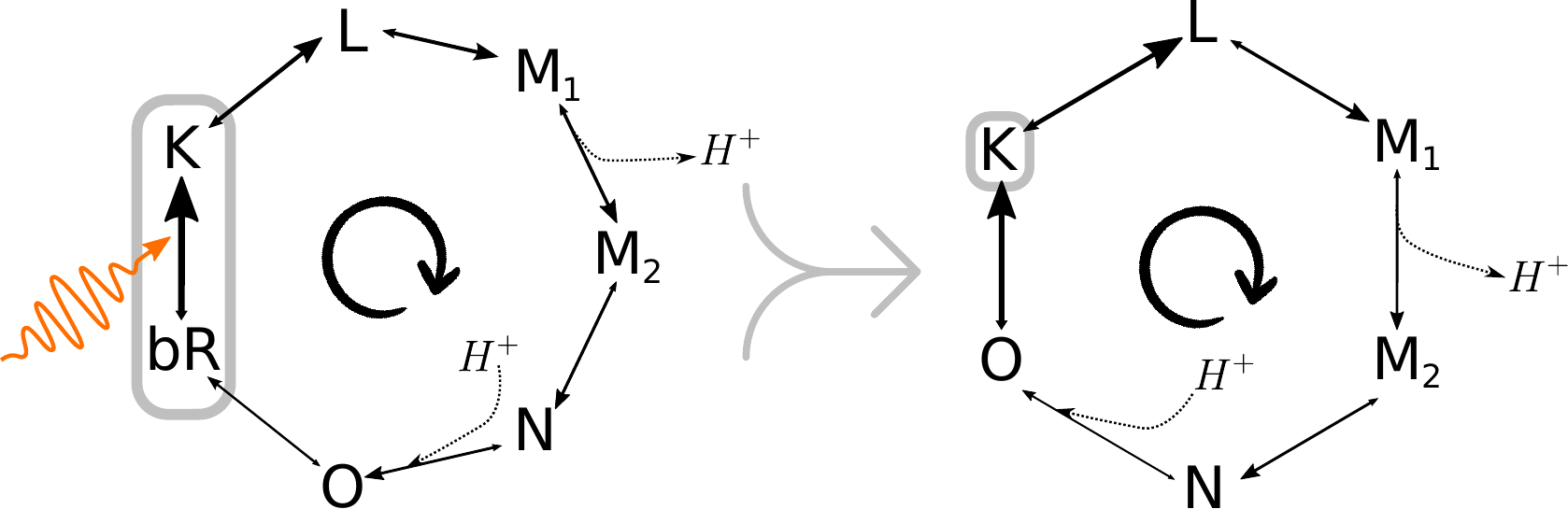}
\caption{\label{fig:br-cg}
\figurefontcmd
\newstuff{Bacteriorhodopsin cycle. Using the irreversible character and fast speed of the $\mathrm{bR}\to K$ transition, we coarse-grain the cycle from the original seven-state system (left) to a six-state one (right). Circled region in gray shows the states and transition that is coarse-grained into a single state. 
}}
\end{figure}

We use $g_{ji}$ to indicate the free energy transferred to the internal reservoir by the jump from state $i\in \{K, L, M_1, M_2, N, O\}$ to state $j\in \{K, L, M_1, M_2, N, O\}$.  Only transitions between neighboring states in the cycle are permitted: $K\leftrightarrow L$, $L\leftrightarrow M_1$, $M_1\leftrightarrow M_2$, $M_2\leftrightarrow N$, $N\leftrightarrow O$, and $O\leftrightarrow K$. The values of $g_{ji}$ are zero for all transitions except for the transition $M_1 \leftrightarrow M_2$, corresponding to proton transport across the membrane. This value is determined via Eq.~\eqref{eq:pstep} as a function of the electrochemical potential (a.k.a. proton motive force), which includes contributions from electrical potential and pH difference across the membrane. To summarize:
\begin{align}
\begin{gathered}
g_{LK} = g_{KL} = g_{M_1 L} = g_{L M_1} = g_{N M_2} = g_{M_2 N } = g_{O N} = g_{N O} = g_{K O} = g_{O K } = 0 \\
g_{M_2  M_1} = -g_{M_1 M_2} =\Delta \mathrm{p} := e\Delta \psi-(\ln 10)\beta^{-1}\Delta\text{pH} \qquad\text{joules}   
\end{gathered}
\label{eq:appgvals}
\end{align}
We emphasize that $\Delta\psi$ is the  membrane electrical potential in volts, where $\Delta\psi\ge0$ indicates that the inside
is more negative. $\Delta\text{pH}$ is the membrane pH difference, where $\Delta\text{pH}\le0$ indicates that the inside is more basic.  
No free energy is exchanged in the processes internal to each cycle state, so 
\begin{align}
\ggB=0 \,.
\end{align}

The dynamics are represented by the system's rate matrix. We parameterize the  transition rate for the jump from state $i\in \{K, L, M_1, M_2, N, O\}$ to neighboring state $j\in \{K, L, M_1, M_2, N, O\}$  as
\begin{align}
R_{ji} = \frac{\kappa_{ji}}{1 + e^{-\Delta s^\mathrm{tot}_{ji}}}\,,\label{eq:br-appldb}
\end{align}
where $\kappa_{ji}=\kappa_{ij}$ is the relaxation rate for the transition $i\leftrightarrow j$, and $\Delta s^\mathrm{tot}_{ji}=-\Delta s^\mathrm{tot}_{ij}$ is the entropy produced during the transition $i \to j$. 
This is a common parametrization which guarantees that the rates satisfy local detailed balance (LDB)~\cite{ferreira2006model}. 
The entropy production for each jump $i\to j$ is
\begin{align}
\Delta s^\mathrm{tot}_{ji} = \beta(f_i -f_j - g_{ji} + m_{ji}) \,.
\end{align}
where $f_i$ refers to internal free energies in joules~\eqref{eq:appfdef}, $\beta = 1/k_B T$ and $k_B= 1.38\times10^{-23}\ \text{joules}\cdot\left(^{\circ} \text{ K}\right)^{-1}$ is Boltzmann's constant.  
As above $g_{ji}=-g_{ij}$ is the increase of free energy of the internal reservoir, while $m_{ji}=-m_{ij}$ is the decrease of free energy in the external source during the transition $i \to j$. In fact, $m_{ji}=0$ for all $i,j$ except the transition $O\leftrightarrow K$, which corresponds to photon absorption. The energy absorbed from the photon is $hc/\lambda$ joules, where $h$ is Planck's constant, $c$ is speed of light, and $\lambda$ is the photon wavelength.  
We use a physiologically plausible value of 580 nm~\cite{lanyi1986halorhodopsin}. 
At this wavelength and temperature of $293^\circ$ K,  
\begin{align}
m_{K O} = -m_{OK} = hc/\lambda \;\;\text{joules} \quad \approx \quad 84 \,k_B T \,.
\end{align}
Observe that the transition $O\to K$ is highly irreversible ($r_{KO} \gg r_{OK}$). In fact, our results are essentially the same regardless of whether the transition rate for this step is computed using \eqref{eq:br-appldb} or made absolutely irreversible. 

The relaxation kinetics that enter into \eqref{eq:br-appldb} are taken from Table 1 in Ref.~\cite{lorenz-fonfriaSpectroscopicKineticEvidence2009}. We use the geometric mean of the upper and lower estimates of $k_\mathrm{relax}^{-1}$ to get the following relaxation rates (in sec$^{-1}$):
\begin{align}
\begin{aligned}
&\kappa_{LK}      && = \kappa_{KL} &= 2.57\times 10^5 
&\qquad\qquad\kappa_{N M_2} &&= \kappa_{M_2 N} &=  7.22\times 10^2 
\\
&\kappa_{M_1 L}      && = \kappa_{L M_1} &= 2.55\times 10^4 
&\qquad\qquad\kappa_{O N }  &&= \kappa_{N O} &=5.10\times 10^2 \\
&\kappa_{M_2 M_1}   && = \kappa_{M_1 M_2} &= 5.42\times 10^3 
&\qquad\qquad\kappa_{KO}      &&= \kappa_{O K} &=  1.28\times 10^2 
\end{aligned} 
\end{align}

\vspace{10pt}
\newstuff{
For concreteness, here we provide the numerical values of entropy production for each jump $i\to j$, at \emph{in vivo} potential $\Delta \psi = 120$ mV:
\begin{align}
 & && (\Delta s^\mathrm{tot}_{LK},&& \Delta s^\mathrm{tot}_{M_1 L}, &&\Delta s^\mathrm{tot}_{M_2M_1}, &&\Delta s^\mathrm{tot}_{N{M_2}}, &&\Delta s^\mathrm{tot}_{ON}, &&\Delta s^\mathrm{tot}_{KO} &&\!\!\!\!\!\!) \nonumber\\
&= && (2.65, && -0.66, && 0.21, &&0.33, &&-0.66, && 76.62 &&\!\!\!\!\!\!)
\label{eq:numfeff}
\end{align}
The resulting rate matrix, again at the \emph{in vivo} potential, is (in sec$^{-1}$)
\newcommand\mycolwidth{5.5em}
\begin{align}
R = \,& 
\left[
\begin{array}{rrrrrr}
-2.40 \times 10^{5} & 1.70 \times 10^{4} & 0 & 0 & 0 & 1.28 \times 10^{2} \\
2.40 \times 10^{5} & -2.57 \times 10^{4} & 1.68 \times 10^{4} & 0 & 0 & 0 \\
0 & 8.69 \times 10^{3} & -1.99 \times 10^{4} & 2.35 \times 10^{3} & 0 & 0 \\
0 & 0 & 3.07 \times 10^{3} & -2.77 \times 10^{3} & 3.01 \times 10^{2} & 0 \\
0 & 0 & 0 & 4.20 \times 10^{2} & -4.75 \times 10^{2} & 3.37 \times 10^{2} \\
6.78 \times 10^{-32} & 0 & 0 & 0 & 1.74 \times 10^{2} & -4.65 \times 10^{2}
\end{array}
\right]\,.\label{eq:numratematrix}
\\
&\hspace{1em}\underbrace{\hspace{\mycolwidth}}_{K}
\underbrace{\hspace{\mycolwidth}}_{L}
\underbrace{\hspace{\mycolwidth}}_{M_1}
\underbrace{\hspace{\mycolwidth}}_{M_2}
\underbrace{\hspace{\mycolwidth}}_{N}
\underbrace{\hspace{\mycolwidth}}_{O}
\nonumber
\end{align}
}

\subsection{Details of numerical analysis in Fig.~\ref{fig:br-data}}

To calculate the curves plotted in Fig.~\ref{fig:br-data}, we first define the ordered set of transitions 
\begin{align}\mathcal{T}&:=%
(K\leftrightarrow L, L\leftrightarrow M_1, M_1\leftrightarrow M_2, M_2 \leftrightarrow N, N\leftrightarrow O)\nonumber\end{align}
Then, for each value of the electrical membrane potential $\Delta \psi$ we perform the following:
\begin{enumerate}
\item Use the transition rates from \eqref{eq:br-appldb} to define the `total' (baseline-and-control) rate matrix $R$ and numerically solve for $R\pstBC = 0$ 
to obtain $\pstBC$, which is unique since $R$ is irreducible (all states are connected, see Fig. \ref{fig:br-cg}).
\item Use the transition rates from \eqref{eq:br-appldb} and values of $\bm{g}$ from \eqref{eq:appgvals}  to calculate the total harvesting rate as
\begin{align}
\GGdBCbase=\sum_{i,j}\pi_{i}\RBC_{ji} g_{ji} \,.
\end{align}
The total harvesting rate is plotted as the thick black line in Fig.~\ref{fig:br-data}.
\item For each transition $t\in\mathcal{T}$ that acts as control: %
\begin{enumerate} 
\item
Define the baseline transition matrix $\RB$ by removing the chosen transition from $R$. As standard, the diagonal entries $\RB_{ii}$ are determined by $\RB_{ii}=-\sum_{j} \RB_{ji}$.
\item Define optimization parameters $(\pp, \JJ)$ with $\pp\in\mathbb{R}^n$ and $\JJ$ a vector in $\mathbb{R}^{n^2}$ constrained such that $\sum_i p_i = 1$, $p_i\geq 0$ $\forall i$, $\mathbb{B}\JJ = -R^b\pp$ where $\incmatrix\in \mathbb{R}^{n\times n^2}$ 
is the incidence matrix with entries $\incmatrix_{k,ij}=\delta_{ki}-\delta_{kj}$. Finally, we make all $\JJ_{ji}=\JJ_{ij}=0$ for $\{i,j\}\neq t$. 
\item Use the transition rates and free energy values of $\bm{f}$ from \eqref{eq:appfdef} to fix the ``baseline harvesting rate'' and the ``increase of system free energy'' (Eq.~\eqref{eq:deltaF}) functions:
\begin{align}
\GGdB{\pp} &=\sum_{i,j}p_{i}\RB_{ji} g_{ji}\\
\FFdB{\pp} &=\sum_{i,j}p_{i}\RB_{ji}(f_{j}+\invBeta\ln p_{j}).
\end{align}
\item Plug $\FFdB{\pp}$, $\GGdB{\pp}$ and $\epr(\JJ)$ into Eq.~\eqref{eq:Gmax00} to solve for $\popt$, $\JJ^*$ and $\Gmax$ (color-coded line in Fig.~\ref{fig:br-data}) using numerical optimization.
\end{enumerate}
\end{enumerate}

We emphasize that in the data shown in Fig.~\ref{fig:br-data} we add no extra constraints. However, in the remaining of this supplementary material, we consider various additional linear constraints pertaining to activity, thermodynamic affinity and kinetic limitations. Such constraints are imposed at Step 3.b and determine the feasibility set $\Lambda$ in Eq.~\eqref{eq:Gmax00}. %
}

We make three final observations regarding our bacteriorhodopsin model. First, our analysis assumes that control processes can be added without affecting membrane parameters, such as $\Delta\mathrm{pH}$ and $\Delta\psi$. In practice, this may be justified by homeostatic mechanisms.  For instance, excess proton pumping produced by the introduction of control may be balanced by up-regulation of ATPase, which consumes the protons while making ATP. 

Second, in order to consistent with LDB~\eqref{eq:ldb}, changing the  rate of control transition  $i\to j$, while keeping the internal free energy values fixed, may require changing the free energy used by that control transition $\gCi_{ji}$. For transitions coupled to the membrane potential (such as $M_1\leftrightarrow M_2$), this can be achieved by manipulating $\Delta \psi$ or $\Delta \text{pH}$, which here act as external parameters (see~\eqref{eq:pstep}). However, for a transition that is not coupled to the membrane potential, manipulating $\gCi_{ji}$ may require, for example, the consumption of a chemical fuel such as ATP. The strength of driving can be modulated by regulating the nonequilibrium concentration of the chemical fuel.

Third, our model reproduces steady-state currents and parameter dependence (such as membrane potential) that agree with reported data, at least to a first approximation. At the same time, it is worth noting that the equilibrium constants reported in our source of kinetic data~\cite{lorenz-fonfriaSpectroscopicKineticEvidence2009} differ somewhat from the free energy changes reported in our source of thermodynamic data~\cite{varo1991thermodynamics}. This may be due to different experimental or estimation methods, or because some reaction steps are not  approximated well as elementary transitions.

\newstuff{
\subsection{Additional analysis: reprotonation step as control}
\label{app:reprotonationascontrol}

In the analysis in the main text, we observe that near the {\em in vivo} membrane potential, 
the reprotonation reaction ($N\leftrightarrow O$) is the least efficient of all five transitions in $\mathcal{T}$. In this section, we study this transition as control with additional constraints. 

Our analysis can be motivated in the context of synthetic or natural selection for increased output in bacteriorhodopsin~\cite{miller1990kinetic,seitz2000kinetic,wise2002optimization,hillebrecht2004directed}.  Suppose that the genotype-phenotype map of bacteriorhodopsin is sufficiently modular such that the thermodynamic and kinetic properties of individual transitions in the cycle can be separately manipulated by mutations. In fact, in case of the reprotonation transition $N\leftrightarrow O$,  there is evidence that a single amino acid in the bacteriorhodopsin protein specifically and effectively changes the kinetics of that transition~\cite{miller1990kinetic}. In this setting, we ask to what extend the existing $N\leftrightarrow O$ transition is optimal. At the same time, we illustrate how underlying thermodynamic and kinetic constraints, e.g., as might arise from underlying diffusion timescales and biochemistry, can be incorporated when quantifying optimality.

In concrete terms, we let $\RB_{NO}=\RB_{ON}=0$ and fix the flux parameters of the optimization problem \eqref{eq:Gmax00} as $\JJij=\JJji=0$ for all $\{i,j\}\neq\{N,O\}$. Formally, we consider the following feasibility set in \eqref{eq:Gmax00}:
\begin{align} %
\Lambda^{NO}:=\{(\pp,\JJ) : \pp \ge 0, \bm{1}^T\pp = 1, \JJ \ge 0, J_{ij}=0 \text{ if } (i,j)\not \in \{{(N,O)},{(O,N)}\}\}_{} \,.
\end{align}
along with other constraints discussed below. 

To compare the result of optimization to the actual system, we also consider the actual transition as the control, defined via:
\begin{align}
\RC_{O N} = \frac{\kappa_{O N}}{1 + e^{-\Delta s^\mathrm{tot}_{O N}}} 
\qquad\qquad \RC_{N O} = \frac{\kappa_{N O}}{1 + e^{-\Delta s^\mathrm{tot}_{O N}}} 
\end{align}
with all other $\RC_{ij}=0$. The baseline-and-control dynamics correspond to the actual bacteriorhodopsin system described in Sec.~\ref{sec:brmodeldetails}.  See also Fig.~\ref{fig:br-sm-baseline-control} (left).

For Sections~\ref{app:reprotonationascontrol} and~\ref{app:protonpumpascontrol} in this SM, it is useful to define the  \emph{current} (i.e., net flux) as
\begin{align}
\mathcal{J}=J_{NO}^*-J_{ON}^*
\label{eq:currentdefinition}
\end{align} 
where $J_{NO}^*$ and $J_{ON}^*$ are the optimal fluxes found by our optimization problem~\eqref{eq:Gmax00}. In the setting of these sections, where the $N\leftrightarrow O$ control transition is the missing step of the unicyclic bacteriorhodopsin system, $\mathcal{J}$ is the cyclic current in the presence of the baseline and the optimal control transition.

\begin{figure}
\includegraphics[width=\columnwidth]{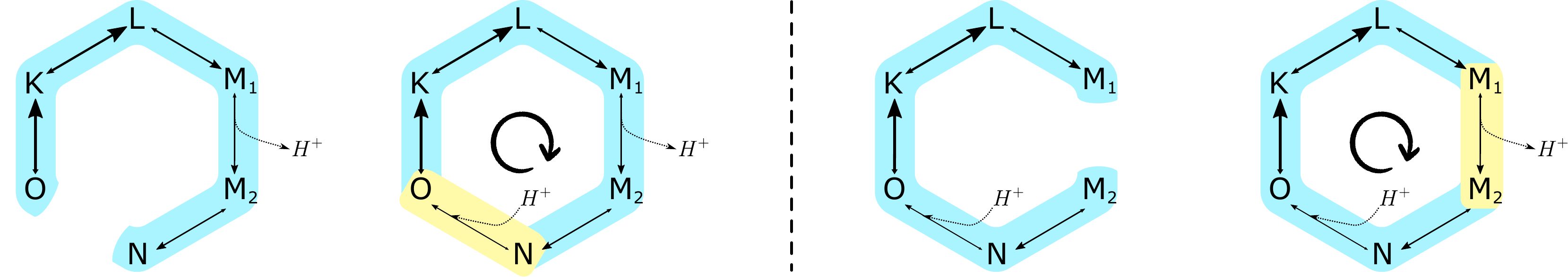}
\caption{\label{fig:br-sm-baseline-control}
\figurefontcmd
\newstuff{Depiction of choices for baseline and control for Sections~\ref{app:reprotonationascontrol} (left) and~\ref{app:protonpumpascontrol} (right) of the SM. In each case, the baseline (shaded blue) does not include the transition of reprotonation ($N\leftrightarrow O$) and proton-pumping ($M_1\leftrightarrow M_2$), respectively; instead, these are treated as control (shaded yellow). For both scenarios, as in the analysis done in the main text, note that no cyclic current circulates under baseline dynamics but does so under baseline-and-control (i.e. the actual) dynamics.
}
}
\end{figure}

\subsubsection{Constrained activity and affinity}

We optimize the maximum harvesting rate with respect to the transition $N\leftrightarrow O$, while imposing an additional constraint on the ``dynamical activity'' of the fluxes $N\to O$ and $O\to N$. This defines a smaller feasibility set in \eqref{eq:Gmax00}:
\begin{align}\label{eq:activity-constrain0}
\Lambda_a^{NO}:=\Lambda^{NO} \cap \{ (\pp,\JJ) : J_{NO}+J_{ON}\leq a\},
\end{align}
for some value of $a$ in units of $\text{sec}^{-1}$. 
This upper-bounds the dynamical activity of our control transition such that fluxes cannot be arbitrarily large. 

\begin{figure}
\includegraphics[width=.7\columnwidth]{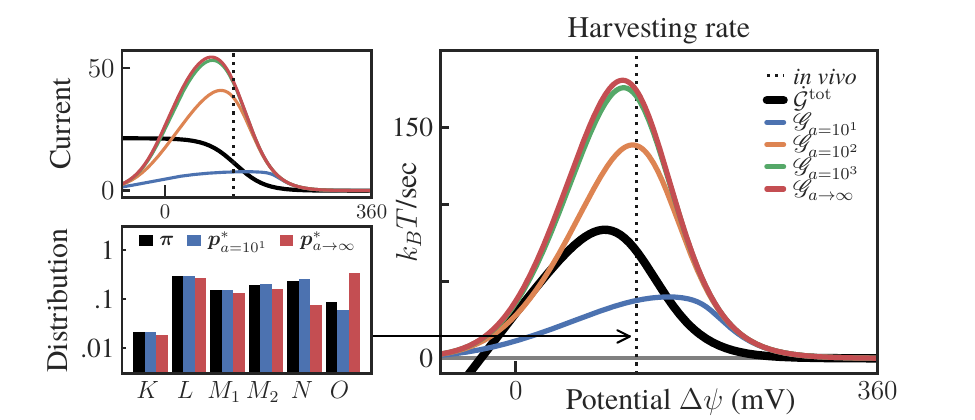}
\caption{\label{fig:N-O-activity-constraint}
\figurefontcmd
\newstuff{Right: actual  harvesting rate (in black) and  optimized harvesting rate for controlled transition $N\leftrightarrow O$ at different values of activity constraint $a$. Dashed vertical line shows \emph{in vivo} membrane potential ($120\text{mV}$). Left panel shows the respective currents (upper plot) and the distribution at \emph{in vivo} value of $\Delta \psi$ (lower plot) for the actual system versus the optimized solutions under $a=10^1$ and $a\to \infty$, in units of $\text{sec}^{-1}$.
}
}
\end{figure}

The result of this analysis are shown in Fig.~\eqref{fig:N-O-activity-constraint}. 
We begin by noting that, in some instances in the right plot in this figure, the optimized constrained control yields a lower harvesting rate than that of the actual system (black line). For example, if $a= 10^1$ at around \emph{in vivo} values, the actual harvesting rate is roughly double the one achieved by the constrained $\Gmax_{a= 10^1}$ (blue line). This is not too surprising since, from our computation of the actual harvesting rate at \emph{in vivo} values, we read off (in $\text{sec}^{-1}$):
$$J^\text{\emph{in\,vivo}}_{ON}=\pi_{N}R_{ON}\simeq 41.6\qquad J^\text{\emph{in\,vivo}}_{NO}=\pi_{O}R_{NO}\simeq 30.1 $$
Thus, constraining the activity as in~\eqref{eq:activity-constrain0} limits the ability of the optimized system to amplify the current above the actual \emph{in vivo} current (see top left plot in Fig.~\ref{fig:N-O-activity-constraint}). In contrast, increasing $a$ approaches the optimal solution to the one shown for the $N\leftrightarrow O$ transition in the main text (Fig.~\ref{fig:br-data}), here shown in red. Note also that for strongly constrained activity, such as $a= 10^1$, the optimal distribution around the \emph{in vivo} value shifts in the opposite way with respect to the unconstrained case (blue against red bar-plots in the lower left of Fig.~\ref{fig:N-O-activity-constraint}). 

Finally, in the upper left of Fig.~\ref{fig:N-O-activity-constraint} we plot the current as a function of $\Delta \psi$. We observe that the actual (baseline-and-control) current effectively goes to zero (stalls) at large potentials, since there is not enough free energy in the cycle to push protons across the membrane. At low potentials, the actual current saturates at a larger value while the harvesting rate plummets. This is because at sufficiently low $\Delta \psi$, transporting protons from inside to outside the cell actually drains free energy stored in the membrane potential, so the cycle operates as a dud.

\emph{Thermodynamic affinity constrained:} Next, we add further constraints to the optimization problem by including thermodynamic affinity limitations, defined by the feasibility set:
\begin{align}
\Lambda_{\mathcal{A},a}^{NO}:=\Lambda_a^{NO} \cap \left\{(\pp,\JJ) : J_{NO}e^{-\mathcal{A}}\leq J_{ON}\leq J_{NO}e^{\mathcal{A}}\right\}.
\end{align}
This linear constraint enforces a bound on the thermodynamic affinity of the $N\leftrightarrow O$ transition,
\begin{align}
    \left|\ln\frac{J_{ON}}{J_{NO}}\right|\leq \mathcal{A} \qquad \text{for some}\quad\mathcal{A} \ge 0\,.
    \label{eq:affmax}
\end{align}

Results for this constraint are shown in Fig.~\ref{fig:N-O-affinity-constraint} with the fixed choice of dynamical activity $a=10^2 \ \text{sec}^{-1}$.
\begin{figure}
\includegraphics[width=.7\columnwidth]{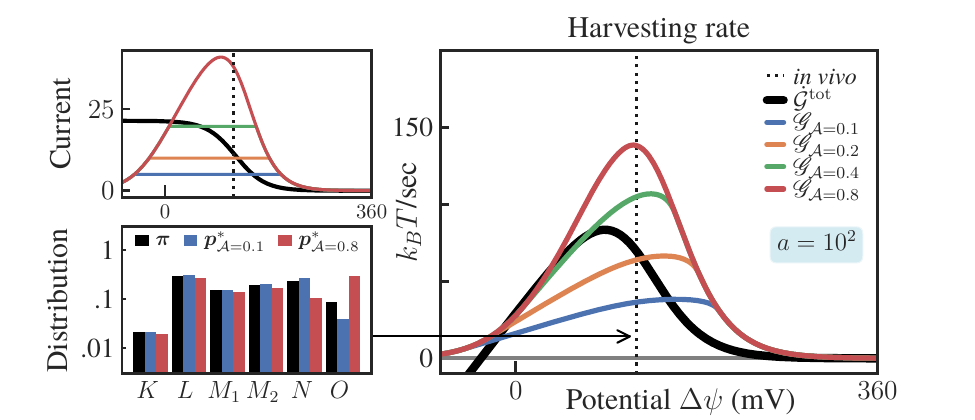}
\caption{\label{fig:N-O-affinity-constraint}
\figurefontcmd
\newstuff{Right: actual harvesting rate (in black) and optimized harvesting rate for controlled transition $N\leftrightarrow O$ at different values of maximum affinity $\mathcal{A}$ and maximum activity $a=10^2\ \text{sec}^{-1}$. Dashed vertical line shows \emph{in vivo} membrane potential ($120\text{mV}$). Left panel shows the respective currents (upper plot) and the distribution at \emph{in vivo} value of $\Delta \psi$ (lower plot) for the actual system and the optimized solutions under $\mathcal{A}=0.1$ and $\mathcal{A}=0.8$.
}
}
\end{figure}

The combination of the constraint on dynamical activity and thermodynamic affinity is necessary to give physically meaningful results. To see why, imagine that only the thermodynamic affinity was constrained, as in \eqref{eq:affmax}.  Now consider some pair of a distribution $\pp$ and flux vector $\JJ$ in our optimization problem \eqref{eq:Gmax00}, where $\JJ$ is restricted to have non-zero transitions only for $N\to O$ and $O\to N$, and satisfy the steady-state condition $\incmatrix \JJ =-\RB \pp$. These fluxes can be increased as $J_{NO}\mapsto J_{NO}+\alpha,J_{ON}\mapsto  J_{ON}+\alpha$ for $\alpha \ge 0$. This transformation doesn't change the current across the transition $N \to O$~\eqref{eq:currentdefinition}, so the steady-state constraint is still valid for the same $\pp$ (note that $\incmatrix \JJ$ only depends on the currents, i.e., net fluxes).  Finally, by choosing $\alpha$ large enough, we can make the EPR term in \eqref{eq:Gmax00} vanish and satisfy the affinity constraint \eqref{eq:affmax}, since $\lim_{\alpha\to\infty}\vert \ln [(J_{NO}+\alpha)/(J_{ON}+\alpha)]\vert =0$. This shows that, lacking other constraints, the affinity constraint is vacuous, since it can be always satisfied by making the one-way fluxes sufficiently large. On the other hand, the combination of the activity and affinity constraints sets a bound on the fluxes.

\subsubsection{Constrained kinetics}

Suppose that we optimize $N\leftrightarrow O$ as control under a constraint involving the respective transition rates. In particular, assume that the possible rates are bounded by some value $\kappa$ such that $R^c_{ON}\le \kappa,R^c_{NO}\leq \kappa$ (in units of $\text{sec}^{-1}$). Then, this can be encoded in the following feasibility set:
\begin{align}\label{eq:activity-constrain1}
\Lambda_{\kappa}^{NO}:=\Lambda^{NO} \cap \{(\pp,\JJ):J_{NO}\leq \kappa p_{O}, J_{ON}\leq \kappa p_{N}\}.
\end{align}
Note that $\Lambda_{\kappa}$ introduces constraints that mix both $\pp$ and $\JJ$. 
We also note that constraints on the transition rates are perhaps the most realistic ones when considering the constraints faced by by biomolecular machines~\cite{miller1990kinetic}. 
Harvesting rate curves for different $\kappa$ are shown in the right plot of Fig.~\ref{fig:N-O-kinetic-constraint}. 

\begin{figure}
\includegraphics[width=.7\columnwidth]{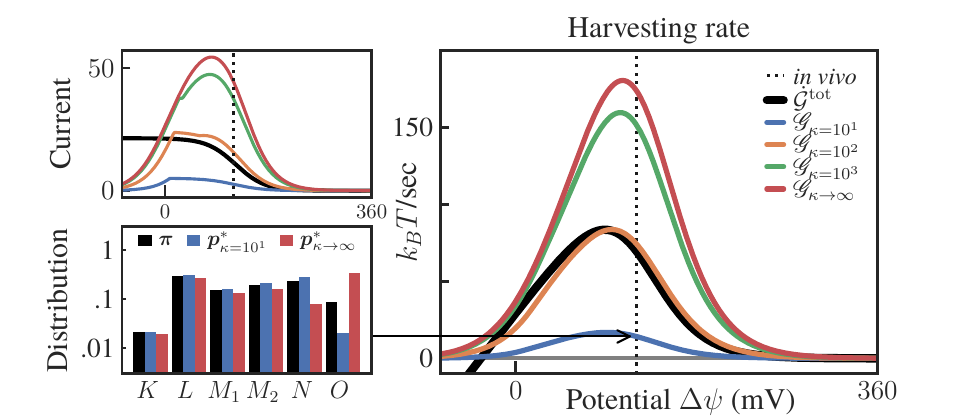}
\caption{\label{fig:N-O-kinetic-constraint}
\figurefontcmd
\newstuff{Right: actual harvesting rate (in black) and optimized harvesting rates for controlled transition $N\leftrightarrow O$ at different values of kinetic constraint $\kappa$. Dashed vertical line shows \emph{in vivo} membrane potential ($120\text{mV}$). Left panel shows the respective currents (upper plot) and the distribution at \emph{in vivo} value of $\Delta \psi$ (lower plot) for the  actual system and the optimized solutions under $\kappa=10^1$ and unconstrained $\kappa$.
}
}
\end{figure}

}

\subsection{Proton-pumping as control}
\label{app:protonpumpascontrol}

The proton-pumping steps ($M_1\leftrightarrow M_2$) is perhaps one of the most experimentally accessible transitions to study. This is because it is the only transition that depends explicitly on the membrane potential $\Delta \psi$, which can be experimentally tweaked and thus used as an external parameter. For example, the electrochemical membrane potential is frequently manipulated by introduction of ionphores, such as valinomycin~\cite{ahmed1983use}.

In this section, we study this transition as control, possibly under constraints.

In this analysis, the baseline rate matrix $\RB$ does not include the transition $M_1 \leftrightarrow M_2$. 
In concrete terms, we let $\RB_{M_2 M_1}=\RB_{M_1 M_2}=0$ and fix the flux parameters of the optimization problem \eqref{eq:Gmax00} as $\JJij=\JJji=0$ for all $\{i,j\}\neq\{M_1,M_2\}$. Formally, we consider the following feasibility set in \eqref{eq:Gmax00}:
\begin{align} %
\Lambda^{M_1 M_2}:=\{(\pp,\JJ) : \pp \ge 0, \bm{1}^T\pp = 1, \JJ \ge 0, J_{ij}=0 \text{ if } (i,j)\not \in \{{(M_1,M_2)},{(M_2,M_1)}\}\}_{} \,.
\end{align}
along with other constraints discussed below. 

To compare the result of optimization to the actual system, we also consider the actual transition as the control, defined via:
\begin{align}
\RC_{M_2 M_1} = \frac{\kappa_{M_2 M_1}}{1 + e^{-\Delta s^\mathrm{tot}_{M_2 M_1}}} 
\qquad\qquad \RC_{M_1 M_2} = \frac{\kappa_{M_1 M_2}}{1 + e^{-\Delta s^\mathrm{tot}_{M_2 M_1}}} \label{eq:M1M2-as-control}
\end{align}
with all other $\RC_{ij}=0$. The baseline-and-control dynamics correspond to the actual bacteriorhodopsin system described in Sec.~\ref{sec:brmodeldetails}. See also Fig.~\ref{fig:br-sm-baseline-control} (right).

Recall that the proton-pumping transition is the only one that involves the membrane potential $\Delta \psi$ (see~\eqref{eq:appgvals}), and that it is no longer part of the baseline. For this reason, the optimization problem~\eqref{eq:Gmax00}, which only involves baseline parameters, no longer depends on the choice of $\Delta \psi$. We now proceed to study this problem for different choices of additional constraints on $\left(\pp,\JJ\right)$. For this reason, in Fig.~\ref{fig:br-data}, the value for $\Gmax$ when $M_1\leftrightarrow M_2$ acts as control is a constant. In this case, it is interesting to compare the optimum harvesting rates with the values of $\GGdBCbase$ at \emph{in vivo} membrane potential versus $\max_{\Delta \psi} \GGdBCbase$, the maximum value attained by $\GGdBCbase$ across all membrane potentials in Fig.~\ref{fig:br-data} (right). Results are shown in Fig.~\ref{fig:br-sm-proton-pumping}.

\begin{figure}
\includegraphics[height=.205\textheight]{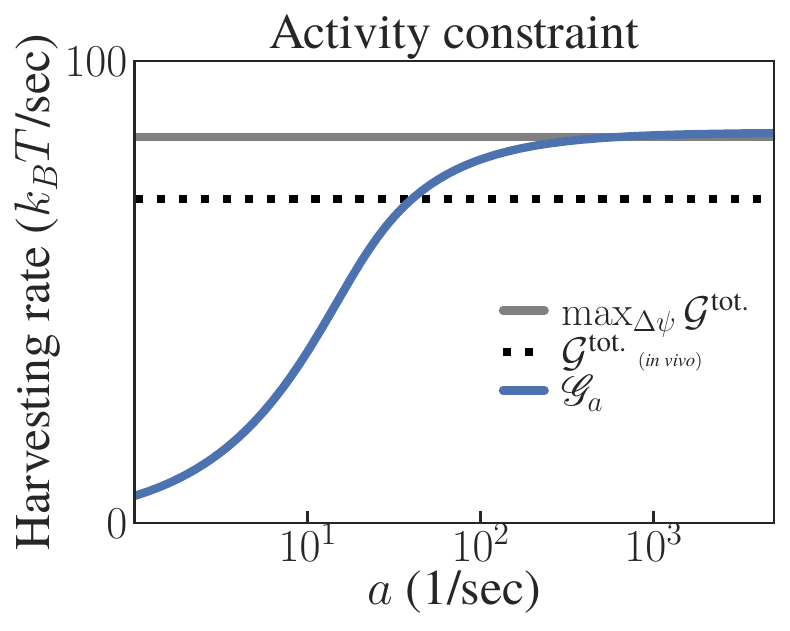}
\includegraphics[height=.205\textheight]{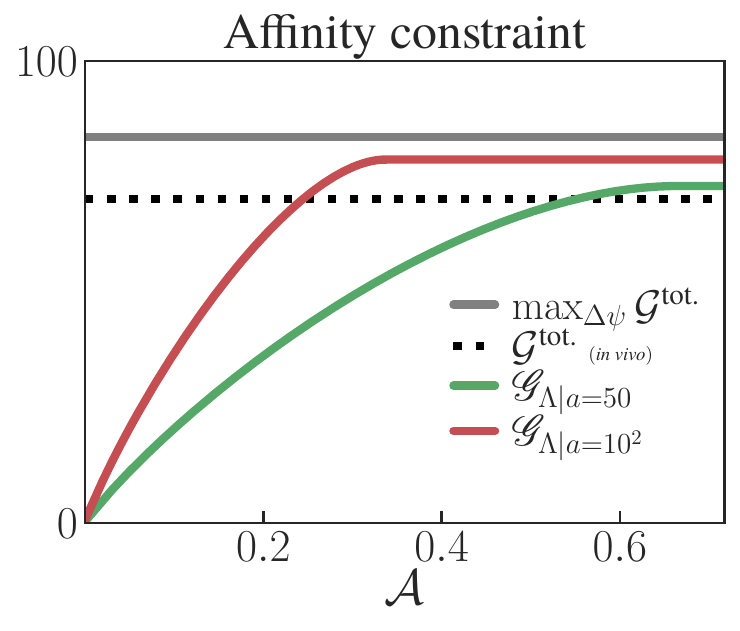}
\includegraphics[height=.205\textheight]{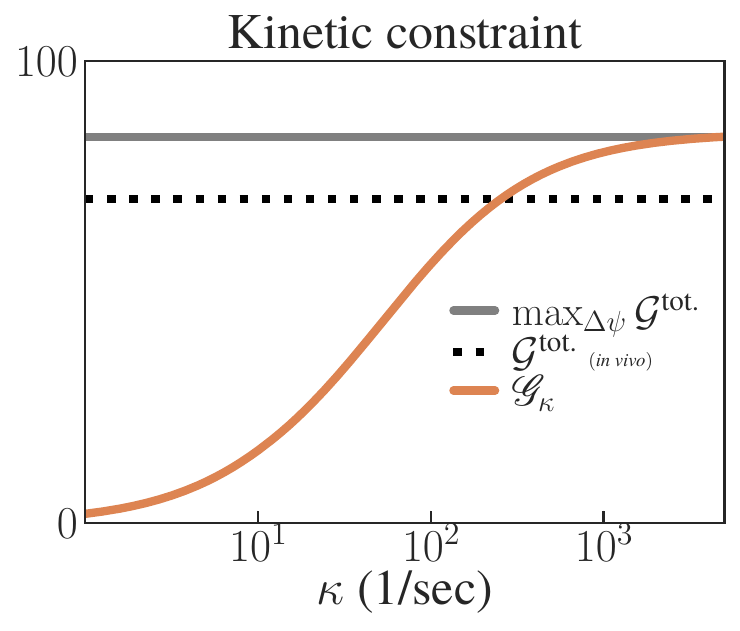}
\makeatletter\long\def\@ifdim#1#2#3{#2}\makeatother
\caption{\label{fig:br-sm-proton-pumping}
\figurefontcmd
Left: Log-log plot of the optimized harvesting rates as a function of dynamical activity bound $a$ ($\Lambda_a^{M_1 M_2}$). Middle: Log-linear plot of the optimized harvesting rates under thermodynamic affinity and dynamical activity constraints ($\Lambda _{\mathcal{A},a}^{M_1 M_2}$) for values of $\mathcal{A}$ at fixed $a=50 \ \text{sec}^{-1}$ (green line) and $a=10^2\  \text{sec}^{-1}$ (red line).  Right: Log-log plot of the optimized harvesting rates as a function of the kinetic bound $\kappa$ ($\Lambda_{\kappa}^{M_1 M_2}$). All figures also show the value of the actual harvesting rate $\GGdBCbase$ at the \emph{in vivo} membrane potential value (black dashed lines), which gives $\sim70k_BT/\text{sec}$, and the maximum achievable harvesting rate that the system can achieve for any $\Delta \psi$ (gray line), $\sim83.5 k_BT/\text{sec}$. }
\end{figure}

\emph{Activity constraint:} analogously to Sec.~\ref{app:reprotonationascontrol}, we now define:
\begin{align}
\Lambda_a^{M_1M_2}:=\Lambda^{M_1M_2}\cap\{(\pp:\JJ) : J_{M_1 M_2}+ J_{M_2 M_1}\leq a, \}\label{eq:act-cons-M1M2}
\end{align}
for values of $a$ in units of $\text{sec}^{-1}$. Solving~\eqref{eq:Gmax00} under $\Lambda_a^{M_1M_2}$ yields the blue curve in Fig.~\ref{fig:br-sm-proton-pumping} for $\Gmax$ as a function of $a$. 

\emph{Thermodynamic affinity constraint:} owing to the same reasoning as in Sec.~\ref{app:reprotonationascontrol}, this constraint proves physically meaningful if combined with another constraint on the fluxes, such as the activity constraint \eqref{eq:act-cons-M1M2}. In this case, we analogously define the feasibility set:
\begin{align}
\Lambda_{\mathcal{A},a}^{M_1M_2}:=\Lambda_a^{M_1M_2}\cap \left\{(\pp,\JJ):J_{M_1 M_2}e^{-\mathcal{A}}\leq J_{M_2 M_1} \le J_{M_1 M_2}e^{\mathcal{A}}\right\}, \label{eq:aff-cons-M1M2}
\end{align}
In the inset of Fig.~\ref{fig:br-sm-proton-pumping}, we show the solution to~\eqref{eq:Gmax00} under $\Lambda_{\mathcal{A},a}^{M_1M_2}$ for values of a range of $\mathcal{A}$ with fixed values of $a=50\ \text{sec}^{-1}$ (green line) and $a=10^2\ \text{sec}^{-1}$ (red line).

\emph{Kinetic constraint:} analogously to Sec.~\ref{app:reprotonationascontrol}, we define:
\begin{align}
\Lambda_{\kappa}^{M_1M_2}:=\Lambda^{M_1M_2}\cap \{(\pp,\JJ) : J_{M_1 M_2}\leq \kappa p_{M_2}, J_{M_2 M_1}\leq \kappa p_{M_1}\}, \label{eq:kin-cons-M1M2}
\end{align}
for values of $\kappa$ in units of $\text{sec}^{-1}$. Solving~\eqref{eq:Gmax00} under $\Lambda_{\kappa}^{M_1M_2}$ yields the orange curve in Fig.~\ref{fig:br-sm-proton-pumping} for $\Gmax$ as a~function~of~$\kappa$. %

\clearpage
\section{Closed-form solutions of Eq.~(\ref{eq:wmax-1}) in different regimes}
\label{sec:static-env}

\global\long\def\RS{A}%
\global\long\def\GmaxLR{\mathscr{G}_{\text{LR}}}
\global\long\def\GmaxFE{\mathscr{G}_{\text{ND}}}
\global\long\def\GmaxDF{\mathscr{G}_{\text{D}}}

We consider a system of $\numstates$ states indexed by $i \in\{1,\ldots,\numstates\}$ that evolves according to a Markovian master equation with rate matrix $\RBC=\RB+\RC$, where $\RB$ and $\RC$ correspond to baseline and control processes respectively.  
Here we study the optimization \eqref{eq:wmax-1}  in three limiting regimes.

\subsection{Linear Response (LR) regime, Eq.~\eqref{eq:linear-resp-solution-2}}
\label{subsec:LR-static-env}

We first consider \eqref{eq:wmax-1} 
in the linear response regime, that is under the assumption that the optimal distribution 
is close to the baseline steady state  $\popt \approx \pstB$, thereby arriving at \eqref{eq:linear-resp-solution-2}. We assume that $\RB$ is irreducible and has a unique steady state $\pstB$ with full support.  
Note that the assumption that $\pstB$ has full support is satisfied when $\RB$ is ``weakly reversible'', meaning that $\RB_{ij} >0$ whenever $\RB_{ji}>0$. 

First, we rewrite the entropic term in  \eqref{eq:wmax-lin-split} as
\begin{align}
\sum_{i,j}\RB_{ij}p_j\ln p_i &= \sum_{i,j}\RB_{ij}p_j\left(\ln p_i-\ln \pstBnb_i\right) +\sum_{i,j}\RB_{ij}p_j\ln \pstBnb_i\label{eq:ent-decomp-LR1}\\
&=\sum_{i,j}\RB_{ij}(p_j-\pstBnb_j)\left(\ln p_i-\ln \pstBnb_i\right)+\sum_{i,j}\RB_{ij}p_j\ln \pstBnb_i, \label{eq:ent-decomp-LR2}
\end{align}
where in the last step we used that $-\sum_{i,j}\RB_{ij}\pstBnb_j\ln \pstBnb_i=0$ when $\pstB$ is the steady-state distribution of the baseline. 
We focus on the first term in \eqref{eq:ent-decomp-LR2}, and use the expansion $\ln x \simeq x-1$ for $x\approx 1$, to derive
\begin{align}
\sum_{i,j}\RB_{ij}(p_j-\pstBnb_j)\ln\left(\frac{p_i}{\pstBnb_i}\right)\approx \sum_{i,j}\RB_{ij}(p_j-\pstBnb_j)\frac{p_i-\pstBnb_i}{\pstBnb_i}=\sum_{i,j}\sqrt{\frac{\pstBnb_j}{\pstBnb_i}}\RB_{ij}z_jz_i,\label{eq:ent-decomp-LR3}
\end{align}
where we defined
\begin{align}
\zz:=D^{-1}(\pp-\pstB)\text{ , with } \  D_{ij}:=\delta_{ij}\sqrt{\pstBnb_i}.\label{eq:z-def}
\end{align}
Our variable to optimize is a probability distribution $\pp$, which needs to satisfy two constrains: $\sum_ip_i=1$ and $p_i\geq 0$. The latter constraint holds automatically, given our hypothesis that $\pp$ is very close to $\pstB$. The former constraint can be expressed in terms of $\zz$ by setting
\begin{align}
\zz^T D^{-1}\pstB = 0 \rightarrow \sum_i\left(p_i-\pstBnb_i\right)=0\iff \sum_i p_i=\sum_i \pstBnb_i=1. \label{eq:z-constrain}
\end{align}
We impose this linear constraint on $\zz$ below, when redefining the optimization problem in terms of $\zz$ rather than $\pp$. 

Before we finish off with the entropic term, we introduce the following symmetric matrix:
\begin{align}
\label{eq:additive-sym}
A_{ij}:=\frac{1}{2}\left(\RB_{ij}+\RB_{ji}\frac{\pstBnb_i}{\pstBnb_j}\right).
\end{align}
The matrix $A$ is sometimes dubbed the ``additive symmetrization'' of $\RB$. It will always obey the detailed balance condition (DB) for $\pstB$,  $\pstBnb_j A_{ij}=\pstBnb_iA_{ji}$. Moreover, $A=\RB$ if and only if $\RB$ obeys DB. We note that only in this latter case would $\pstB$ be an equilibrium distribution. We also define:
\begin{align}
M:=D^{-1}AD\ \rightarrow \ M_{ij}=(D^{-1}AD)_{ij}=\sqrt{\frac{\pstBnb_j}{\pstBnb_i}}A_{ij}.
\end{align}
Observe that $M=D^{-1} A D$, which 
implies that $M$ and $\RS $ are related via a similarity transformation so any right eigenpair $(\lambda,\uu)$ of $\RS$, $\left(\lambda,D^{-1}\uu\right)$ is an eigenpair of the symmetric matrix $M$. Also, since $D$ and $A$ are both symmetric, so is $M$. In particular, since $\RS $ is the sum of two irreducible rate matrices, it is then itself an irreducible rate matrix which has a unique right eigenvector. It is easy to show that, in fact, $\pstB$ is $\RS$'s steady-state distribution with eigenvalue $0$. Then, by similarity, $M$ also has a unique eigenvector $D^{-1}\pstB$ with eigenvalue $0$.    
Moreover, since $A$ is a rate matrix, it is negative semidefinite ($\lambda_\evndx \le 0$), and so $M$ is also negative semidefinite. 
We indicate the right eigenvectors of $A$ as $\uua$ and the eigenvectors of the symmetric matrix $M$ as $\mma=D^{-1}\uua$. We assume that $\mma$ are normalized as $\Vert \mma\Vert =1$, therefore $\Vert D^{-1}\uua\Vert =1$.

Going back to \eqref{eq:ent-decomp-LR3}, we note that 
\begin{align}
\sum_{ij}\sqrt{\frac{\pstBnb_j}{\pstBnb_i}}\RB_{ij}z_jz_i&=\frac{1}{2}\left(\sum_{ij}\sqrt{\frac{\pstBnb_j}{\pstBnb_i}}\RB_{ij}z_jz_i+\sum_{i,j}\sqrt{\frac{\pstBnb_i}{\pstBnb_j}}\RB_{ji}z_iz_j\right)\nonumber \\
&=\sum_{i,j}\sqrt{\frac{\pstBnb_j}{\pstBnb_i}}\left(\RB_{ij}+\frac{\pstBnb_i}{\pstBnb_j}\RB_{ji}\right)z_jz_i=\sum_{i,j}\sqrt{\frac{\pstBnb_j}{\pstBnb_i}}z_jz_i=\sum_{i,j}M_{i,j}z_jz_i.\label {eq:ent-decomp-LR4}
\end{align}

In order to keep track of the second term in  \eqref{eq:ent-decomp-LR2}, let us first recall the free-energy term in \eqref{eq:wmax-lin-split} and the pay-off vector $\hvecbold$ which, in terms of the baseline parameters, reads as $$\hvec_{i}:=\dot{g}^b_{i}+\sum_{j}\RB_{ji}(f_{j}-f_{i}+\gBi_{ji}),$$ and conveniently redefine it such that:
\begin{align}
\psiLRvec= \boldsymbol{\hvec}+\invBeta {\RB}^T\ln\pstB. \label{eq:free-energy-payoff-LR}
\end{align}
For reasons that will become obvious below, we re-scale \eqref{eq:free-energy-payoff-LR} by setting
\begin{align}
\vv:=\frac{\beta}{2}D\psiLRvec. \label{eq:v-def}
\end{align}
We combine the above with the definition \eqref{eq:z-def} and the nonlinear term \eqref{eq:ent-decomp-LR4}. We then recast the optimization problem in terms of the variable $\zz$, which is constrained by \eqref{eq:z-constrain}. Finally, under the LR regime, \eqref{eq:wmax-1} is approximated by
\begin{align}\label{eq:Conditional-Linearized-Target}
\Gmax\simeq\GmaxLR:=\hvecbold^T \pstB + \invBeta \max_{\zz\in \mathbb{R}^n:\zz^TD^{-1}\pstB=0} \zz^TM\zz +2\zz^T\vv.
\end{align}

Eq.~\eqref{eq:Conditional-Linearized-Target} is a quadratic optimization problem, which can be solved using standard techniques from linear algebra. For convenience we summarize these techniques in Lemma~\ref{thm:quad} in Sec.~\ref{sec:quad-theorem} of  this SM. That theorem implies that
\begin{align}
\label{eq:df3qqq}
\GmaxLR =\hvecbold^T \pstB -\invBeta \vv^{T}M^{+}\vv \qquad \qquad \ppLR = \pstB -DM^{+}\vv \,
\end{align}
where  $M^{+}=\sum_{\evndx>1}\lambda_{\evndx}^{-1}\mma{\mma}^T$ is the Moore-Penrose pseudo-inverse of $M$. In applying Theorem~\ref{thm:quad}, we used result \eqref{eq:specialcase2} and the relation $\popt = \pstB + D\zz^*$, where $\zz^*$ solves \eqref{eq:Conditional-Linearized-Target}.

We can rewrite \eqref{eq:df3qqq} using the eigensystem of $M$,
\begin{align}
\GmaxLR = \hvecbold^T \pstB +\invBeta  \sum_{\evndx>1}\frac{ (\vv^T\mma)^2}{-\lambda_{\evndx}} \qquad\qquad  \ppLR=\pstB + \sum_{\evndx>1}\frac{ \vv^T\mma}{-\lambda_{\evndx}}D\mma\,
\end{align} 
Finally, as in the main text we define
\begin{align}
\Omega_\evndx=\invBeta \vv^T\mma=\frac{1}{2} \left( D\psiLRvec\right )^T\left(D^{-1} \uua\right)= \frac{1}{2} \left(\psiLRvec\right)^T \uua,
\label{eq:omdef}
\end{align}
which can be interpreted as the harvesting amplitude of the eigenmodes of $A$. Plugging into \eqref{eq:df3qqq} gives Eq.~\eqref{eq:linear-resp-solution-2} in the main text,
\begin{align}
\label{eq:Prob-Diff-LR}
\GmaxLR = \hvecbold^T \pstB + {\beta} \sum_{\evndx>1}\frac{|\Omega_\evndx|^2}{-\lambda_{\evndx}} \qquad\text{and}\qquad  \ppLR =\pstB + \sum_{\evndx>1}\frac{\beta\Omega_\evndx}{-\lambda_{\evndx}}\uua\,
\end{align}

\subsubsection*{Region of validity of the LR regime}
Note that the LR regime applies when $\Vert \popt-\pstB\Vert \ll 1$. We can write
\begin{align}
\label{eq:LR-validity-RAW}
\begin{aligned}
\Vert \ppLR-\pstB\Vert&= \beta\Big\Vert\sum_{\evndx>1} \frac{\Omega_{\evndx}}{-\lambda_{\evndx}}\uua \Big\Vert \leq \sum_{\evndx>1}\frac{\beta\vert \Omega_{\evndx}\vert}{-\lambda_{\evndx}}\Vert\uua\Vert\\
&\leq \sum_{\evndx>1}\frac{\beta \vert \Omega_{\evndx}\vert}{-\lambda_{\evndx}}\Vert D\Vert\Vert \mma\Vert 
\le
 \beta (n-1) \max_{\evndx>1} \vert \Omega_\evndx/\lambda_\evndx\vert \max_i \sqrt{\pstBnb_i}\,,
\end{aligned}
\end{align}
where we used the triangle inequality and properties of the matrix norm. Therefore, the LR approximation is guaranteed to be valid when 
\begin{align}
\label{eq:LR-validity}
\max_{\evndx>1}\vert \Omega_{\evndx}/\lambda_\evndx\vert \ll \frac{1}{\beta(n-1)\max_i \sqrt{\pstBnb_i}}\,.
\end{align} 
In other words, the harvesting amplitude on each eigenmode must be slower than relaxation modes, up to a factor that depends only on $\beta$, $n$, and the steady state $\pstB$.


\subsection{Deterministic (D) regime, Eq.~\eqref{eq:DF}}
\label{subsec:macro-limit}
In the Deterministic (D) regime, we assume that the nonlinear terms in \eqref{eq:wmax-lin-split} are small compared to the linear terms. 
This allows us to approximate the optimal solution of \eqref{eq:wmax-lin-split} as the linear optimization
\begin{align}
\Gmax &=\max_{\pp}\, \invBeta  \left(\RB\pp\right)^T\ln \pp +  \hvecbold^T \pp\label{eq:app333}\\
&\approx \max_{\pp} \hvecbold^T \pp =:\GmaxDF,
\label{eq:appD98}
\end{align}
where $(\ln \pp)^T : = (\ln p_1, \ldots,\ln p_n)$. The maximum in \eqref{eq:appD98} is achieved by $\left(\ppM\right)_i=\delta_{ii^*}$ with $i^*=\arg\max_i \hvec_i$, so
\begin{align}
    \GmaxDF=\phi_{i^*} = \max_{i} \big[\dot{g}^b_{i}+\sum_{j}\RB_{ji}(f_{j}-f_{i}+\gBi_{ji}) \big]\,,
\end{align}
as appears in the main text.

\subsubsection*{Region of validity of the D regime}
\label{sec:Mvalidity}

We now discuss when the Deterministic regime approximation is valid. 

We begin by making the assumption that the baseline harvesting rate is non-negative, $\GGd(\pstB) \ge 0$. 
In addition, for notational convenience, let $K:=\max_i  (-\RB_{i i})$ indicate the largest escape rate. 

We express the regime of validity in terms of the following two parameters:
\begin{align}
\alpha &:=\frac{K}{\beta \GmaxDF} = 
\frac{\max_i (-\RB_{i i})}{\beta \max_{i} \big[\ggBi_{i}+\sum_{j}\RB_{ji}(f_{j}-f_{i}+\gBi_{ji}) \big]} \ge 0\label{eq:alpha-def} \\
\gammaConst &:= -\ln\Big(\min_{i} \pstBnb_i \Big) \ge 0\,.
\end{align}
The parameter $\alpha$ reflects the balance between diffusion out of the optimal state versus Deterministic harvesting rate. 
The parameter $\gamma$ is the minimal steady-state probability of any state.   
 
Assuming $\alpha <1$, we show that 
\begin{align}
\label{eq:goal-proof-macrolimit}
\left|\frac{\Gmax}{\GmaxDF}-1\right|\leq \alpha\left( -\ln \alpha + \gammaConst + 1\right)\,
\end{align}
The RHS of~\eqref{eq:goal-proof-macrolimit} vanishes for $\alpha\to0$, so the LHS tightens as $\Gmax/\GmaxDF\to 1$. 
We emphasize that the D approximation only becomes accurate in relative, not additive, terms (i.e., it is not necessarily true that $\Gmax - \GmaxDF\to 0$).

To derive \eqref{eq:goal-proof-macrolimit}, we first consider an upper bound on $\Gmax$. Observe that  
for any $\pp$, 
\begin{align}
-\derivSRB{\pp}=\sum_{i,j} \RB_{ij}p_j \ln p_i \le  \sum_{i} \RB_{ii} p_i\ln p_i \leq \sum_{i}| \RB_{ii}|(-p_i\ln p_i) \leq K S(\pp) \leq K\ln n 
\label{eq:dtSupperbound}
\end{align}
where we used $\RB_{ij} p_j \ln p_i \le 0$ for $i\ne j$ and $S(\pp)\le \ln n$.    
Plugging into \eqref{eq:app333} and using that $\GmaxDF \ge \hvecbold^T \pp$ for all $\pp$ and $\gammaConst \ge \ln n$, we then have
\begin{align}
\label{eq:upper-bound-deltaWmax-delta}
\Gmax\leq K(\ln n)/\beta + \GmaxDF=  \GmaxDF\alpha\ln n + \GmaxDF \le  \GmaxDF\alpha\gammaConst+ \GmaxDF \,.
\end{align}

To derive a lower bound on $\Gmax$,  define the distribution $p_i:=\left(1-\alpha\right)\delta_{ii^*}+\alpha\pstBnb_i$, with $\alpha$ as in~\eqref{eq:alpha-def}. Plugging this distribution into the objective in~\eqref{eq:app333} yields
\begin{align}
\Gmax&\geq  \invBeta  \left(\RB\pp\right)^T\ln \pp  +  \hvecbold^T \pp=\invBeta(1-\alpha)\sum_i (\RB_{ii^*}\ln p_i) +  \hvecbold^T \pp,
\label{eq:text-dist-delta2}
\end{align}
where we used that $\sum_j \RB_{ij}\left(\left(1-\alpha\right)\delta_{ji^*}+\alpha\pstBnb_j\right)=\RB_{ii^*}(1-\alpha)$ since $\RB\pstB=0$. Using $\RB_{i^*i^*}\ln p_{i^*}\geq 0$, we bound the first term on the right side of~\eqref{eq:text-dist-delta2} as 
\begin{align}
\label{eq:lower-bound-MR1}
\sum_i \RB_{ii^*}\ln p_i&\geq \sum_{i:i\neq i^*}\RB_{ii^*}\ln p_i=\sum_{i:i\neq i^*}\RB_{ii^*}\ln (\alpha \pstBnb_i) \nonumber \\
&\geq \sum_{i:i\neq i* }\RB_{ii^*}(\ln \alpha -\gammaConst )\nonumber \\
&=-\RB_{i^*i^*}(\ln \alpha -\gammaConst) \geq K\left(\ln \alpha - \gammaConst\right)\,,
\end{align}
where we used that $\ln \alpha - \gammaConst <0$ in the last inequality. 
For the second term on the right side of~\eqref{eq:text-dist-delta2}, 
\begin{align}
\label{eq:identity}
 \hvecbold^T \pp=\sum_i p_i\phi_i=(1-\alpha)\sum_i\delta_{ii^*}\hvec_i+\alpha\sum_i\pstBnb_i\hvec_i=(1-\alpha)\GmaxDF+\alpha \GGd(\pstB).\end{align}
We can plug~\eqref{eq:lower-bound-MR1} and~\eqref{eq:identity} into~\eqref{eq:text-dist-delta2} to give a lower bound on $\Gmax$, 
\begin{align}
\Gmax&\geq \invBeta (1-\alpha) K (\ln \alpha -\gammaConst)+(1-\alpha)\GmaxDF+ \alpha \GGd(\pstB)\nonumber \\
&= (1-\alpha) \GmaxDF \alpha(\ln \alpha - \gammaConst - 1)+\GmaxDF+ \alpha \GGd(\pstB)\nonumber 
\end{align}
where we used the definition of $\alpha$~\eqref{eq:alpha-def} and rearranged. Since $\ln \alpha -\gammaConst <0$, we can further drop the $\alpha^2$ term to give 
\begin{align}
\Gmax&\ge  \GmaxDF \alpha(\ln \alpha -\gammaConst - 1)+\GmaxDF+ \alpha \GGd(\pstB)\nonumber \,,\\
&\ge  \GmaxDF \alpha(\ln \alpha -\gammaConst - 1)+ \GmaxDF\,,
\label{eq:lower-bound-deltaWmax-delta}
\end{align}
where in the second line we used the assumption $\GGd(\pstB) \ge 0$.
Combining and using that $\alpha <1$ yields
\begin{align}
\label{eq:sandwich-bounds-delta}
\GmaxDF\alpha\left(\ln\alpha -\gammaConst - 1\right)\leq \Gmax-\GmaxDF\leq \GmaxDF \alpha\gammaConst \le  \GmaxDF \alpha(-\ln \alpha + \gammaConst+1)\,,
\end{align}
which can be rearranged to give~\eqref{eq:goal-proof-macrolimit}.

Interestingly, we can also derive a relative perturbation bound on the maximum \emph{increase} of the harvesting rate, above the baseline harvesting rate. Specifically, using a similar derivation as above, we can show that
\begin{align}
    \left|\frac{\Gmax - \GGd(\pstB)}{\GmaxDF - \GGd(\pstB)}-1\right|\leq \alpha\left( -\ln \alpha + \gammaConst + 1\right)\,,
    \label{eq:appPertDbound2}
\end{align}
where $\alpha=K/\beta (\GmaxDF - \GGd(\pstB))\ge 0$ and $\gammaConst$ is defined as before. In fact, this perturbation bound on the increase holds without any additional assumptions, i.e., regardless of the sign of $\GGd(\pstB)$.


\subsection{Near-Deterministic (ND) regime, Eq.~\eqref{eq:wdPD}}
\label{subsec:FE-static-env}

In this regime, we consider a perturbation of the Deterministic regime: 
we assume that the optimal $\popt$ is close to a delta-function distribution centered at 
$i^*={\text{argmax}}_i \hvec_i$. 
If $p_i\approx \delta_{ii^*}$,
\begin{align}
\dot{p}_i=\sum_j\RB_{ij}p_j\approx \RB_{ii^*} \,.
\end{align}
This allows us to approximate the entropic term in \eqref{eq:app333} as
\begin{align}
(\RB\pp)^T\ln \pp&=\sum_{i,j}\RB_{ij}p_j\ln p_i\approx \sum_i \RB_{ii^*}\ln p_i \nonumber \\
&= \RB_{i^*i^*}\ln\Big(1-\sum_{i:i\neq i^*} p_i\Big)+\sum_{i:i\neq i^*} \RB_{ii^*}\ln p_i\nonumber \\
&\approx\sum_{i:i\neq i^*} \left(-\RB_{i^*i^*}p_i+\RB_{ii^*}\ln p_i\right)
\end{align}
where we used $\ln (1-x) \approx -x$ when $x\approx 0$. Plugging into~\eqref{eq:app333} gives
\begin{align}
\label{eq:PD-Target-Function}
\Gmax&\simeq\max_{\pp} \invBeta \sum_{i:i\neq i^*}\left(-\RB_{i^*i^*}p_i+\RB_{ii^*}\ln p_i\right)+\Big(1-\sum_{i:i\neq i^*} p_i\Big)\hvec_{i^*}+\sum_{i:i\neq i^*} p_i\hvec_i\nonumber \\
&=\GmaxDF+\max_{\pp}\sum_{i:i\neq i^*}\left[-p_i\left(\hvec_{i^*}-\hvec_i+\beta^{-1}\RB_{i^*i^*}\right)+\beta^{-1}\RB_{ii^*}\ln p_i\right]=:\GmaxFE\,,
\end{align}
where we recall that $\GmaxDF=\hvec_{i^*}$, which does not depend on $\pp$. After maximizing with respect to $\{p_i\}_{i:i\neq i^*}$ by taking derivatives and setting them to zero, one obtains
\begin{align}
\label{eq:Optimum-p-PD}
(\ppFE)_{i}=\frac{\RB_{ii^*}}{\beta(\hvec_{i^*}-\hvec_i)+ \RB_{i^*i^*}}\qquad (\text{for}\,i\neq i^*)\qquad  {\text{and}}\qquad\, (\ppFE)_{i^*}=1-\sum_{i:i\neq i^*} (\ppFE)_i\,.
\end{align}
Plugging~\eqref{eq:Optimum-p-PD} into~\eqref{eq:PD-Target-Function} gives
\begin{align}
\label{eq:Optimium-W-PD}
\GmaxFE = \GmaxDF + \invBeta \sum_{i:i\neq i^*} \RB_{ii^*}\left[\ln(\ppFE)_i-1\right]\,.
\end{align}

\subsubsection*{Region of validity of the ND regime}


The ND approximation applies when $(\ppFE)_{i^*} \approx 1$, or equivalently when
$$
1\gg\sum_{i:i\neq i^*} (\ppFE)_{i} = \sum_{i:i\neq i^*} \frac{\RB_{ii^*}}{\beta(\hvec_{i^*}-\hvec_i)+ \RB_{i^*i^*}}\,.
$$
For convenience, 
we define
the {\em harvesting gap} as the difference in the pay-off between the optimal state $i^*$ and the second optimal state,
$$
\Delta \psiMmin:= 
\hvec_{i^*} - \max_{i:i\neq i^*} \hvec_i\,.
$$
By combining and rearranging the above, we can simplify the validity condition for the ND approximation as
\begin{align}
\label{eq:FE-validity}
\beta \Delta \psiMmin \gg  -2\RB_{i^*i^*}, 
\end{align}
Note that this condition also guarantees that $(\ppFE)_{i}\ge 0$ for $i\ne i^*$. 
Expression~\eqref{eq:FE-validity} implies that ND approximation is valid when the harvesting gap is much larger than the rates of escape from the optimal state $i^*$. 

\clearpage
\section{Unicyclic model}
\label{sec:rings1}

\global\long\def\cycla{\omega_a}
\global\long\def\cyclb{\omega_b}
\global\long\def\cyclc{\omega_c}
\global\long\def\cyclabo{\omega_{a+b-1}}
\global\long\def\Reff{\mathcal{R}}
\global\long\def\GmaxLR{\mathscr{G}_{\text{LR}}}
\global\long\def\GmaxFE{\mathscr{G}_{\text{ND}}}
\global\long\def\GmaxDF{\mathscr{G}_{\text{D}}}

Here we analyze the unicyclic model considered in the main text. Before proceeding, we briefly introduce some useful facts about the eigendecomposition of a unicylic rate matrix.

\subsection{An Algebraic Aperitif: eigendecomposition of unicyclic rate matrix}
\label{sec:algebraic-aperitif}

Consider a unicylic rate matrix $\RB$,
\begin{align}
\label{eq:Ring_Sates_Rate_Matrix_Baseline}
\RB= \left(\begin{matrix}
-2k & k & 0 & \cdots & k \\
k & -2k & k & \cdots & 0 \\
0 & k & -2k & \cdots & 0 \\
\vdots & \vdots & \vdots & \ddots & \vdots \\
k & 0 & 0 & \cdots & -2k 
\end{matrix}\right)\,,
\end{align}
which is a symmetric circulant matrix. Due to symmetry, its steady state is uniform: $\pstBnb_i=1/n$ for all $i$. The eigensystem for $R$ is obtained from the theory of circulant matrices~\cite{gray2006toeplitz}, which, for odd $n$, yields:
\begin{align}
\lambda_{a}&=-2k\left[1-\Re\left(\cycla\right)\right]=-2k\left[1-\cos\left(\frac{2\pi(a-1)}{n}\right)\right] \,,
\label{eq:eigLamExp}
\end{align}
where $\cycla:=e^{\mathrm{i}2\pi (a -1)  /n}$. These eigenvalues are all degenerate twice (except $\lambda_{(1)}=0$, whose eigenvector is simply $(1,1,\ldots,1)$). An orthonormal choice of eigenbasis is given by the set 
\begin{align}\label{eq:eigenvectors-real-basis}
\{m_{(a)}\}=\left\{\frac{1}{\sqrt{n}}\left(1,\cycla,\cycla^2,\ldots,\cycla^{n-1}\right)\right\}.
\end{align}

\subsection{Transition harvesting cycle}
\label{sec:uni-static}

We now analyze the unicyclic model using techniques developed in Sec.~\ref{sec:static-env}. 
We consider a systems with $n$ state arranged in a ring, where baseline transitions are symmetric with uniform kinetics: $i\overset{k}{\underset{k}{\rightleftharpoons}}i+1\mod n$, $\forall i=1,\ldots,n$ (see Fig.~\ref{fig:rings} here and Fig.~\ref{fig:two-rings}~(a) in the main text). The baseline dynamics are equivalent to a random walk on a one-dimensional ring, and the baseline rate matrix is given by \eqref{eq:Ring_Sates_Rate_Matrix_Baseline}.  

We assume that a single transition, taken to be $1 \to 2$ without loss of generality, harvests $\gBi_{21}=-\gBi_{12}=\Theta$ joules of free energy. 
In addition, note that in the main text we chose to set the units of $k=1$ without loss of generality. However, 
we have kept $k$ explicit in the rest of the SM.  For this system, the baseline steady state is uniform, $\pstBnb_i=1/n$, and the harvesting vector is
\begin{align}
\label{eq:payoff3}
\psiLRvec=\hvecbold=(k\Theta,-k\Theta,0,\ldots,0)\,.
\end{align}

\begin{SCfigure}
    \centering
\includegraphics[trim={0 0 12cm 0},clip, width=0.3\textwidth]{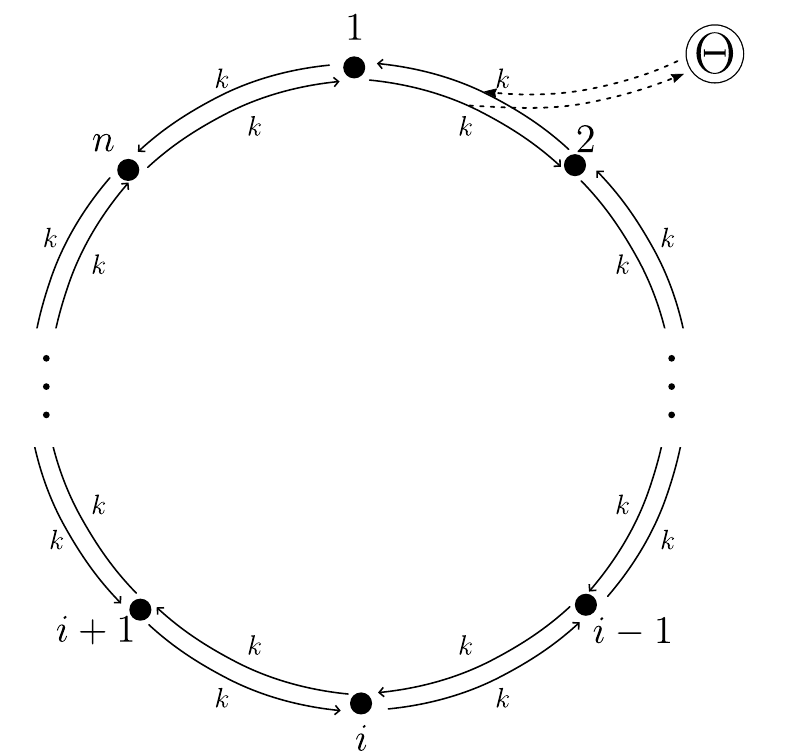}
\caption{\label{fig:rings}
\figurefontcmd
A unicyclic system across the $n$ states with symmetric back-and-forth rates $k$, as discussed in the main text. An amount of $\Theta$ joules is harvested as free energy into the internal reservoir for each transition $1\to 2$ (and vice versa for $2\to 1$). 
}
\end{SCfigure}

In the rest of this section, we analyze this model in the LR, D and ND regimes. 
We will also discuss the validity of these approximation in terms of the model parameters.

\subsubsection{LR regime}

We first consider the LR solution~\eqref{eq:Prob-Diff-LR}. 
Combining \eqref{eq:payoff3} with~\eqref{eq:v-def}, $\pstBnb_i=1/n$, and $D=\frac{1}{\sqrt{n}}I_n$ gives $\vv=\frac{k\beta\Theta}{2\sqrt{n}}(1,-1,0,\ldots,0)$.  
In addition, since in this particular example $R$ obeys detailed balance, $\RB=\RS $. Finally, it is easy to verify that $M=D^{-1}\RS D=\RS =\RB$. Thus, $M$ has the same eigendecomposition as $R$. The normalized eigenvectors of $M$ are given by \eqref{eq:eigenvectors-real-basis}. 

We can compute $\Omega_{a}=\invBeta \vv^T\mma$ for $a=2,\ldots,n$, using the eigenvector set given in~\eqref{eq:eigenvectors-real-basis},which simply yields 
\begin{align}
\Omega_{a}=\frac{k \Theta}{2n}\left(1-\cycla\right)\,.
\label{eq:omegaAexp}
\end{align}

Next, we compute the maximum harvesting rate attainable using~\eqref{eq:Prob-Diff-LR},
\begin{align}
\label{eq:deltaWmax-stat-ring}
 \Gmax_{\text{LR}} = \sum_{a=2}^{n}\frac{\beta\left|\Omega_a\right|^{2}}{-\lambda_{a}} &=
\beta k \left(\frac{\Theta}{2n}\right)^2  \sum_{a=2}^{n}\frac{\left|1-\cycla\right|^2}{2\left[1-\cos\left(\frac{2\pi(a-1)}{n}\right)\right]} \nonumber \\
&=\beta k \left(\frac{\Theta}{2n}\right)^2  \sum_{a=2}^{n}1\nonumber\\
&=\beta k \left(\frac{\Theta}{2n}\right)^2(n-1)\,.
\end{align}

On the other hand, continuing with odd $n$, it is also possible to compute the deviation of the optimal distribution from stationary distribution using~\eqref{eq:Prob-Diff-LR},
\begin{align}
\label{eq:prob-diff-trans}
\Delta \ppLR :=
\ppLR-\pstB=\beta\sum_{a=2}^{n} \frac{\Omega_{a}^{\dagger}D\mma}{-\lambda_{a}}\,,
\end{align}
which is depicted in Fig.~\ref{fig:delta_ps}(a). 
(Note that, due to our choice of a complex-valued eigenbasis, we must be careful in adding the complex conjugate on the $\Omega_a$, following the ordering of the operator $M^+$ in~\eqref{eq:Prob-Diff-LR}.)
Now, component by component, we can rewrite~\eqref{eq:prob-diff-trans} as
\begin{align}
\left(\Delta \ppLR\right)_i &=\frac{\beta\pwr}{4n^2} \sum_{a=2}^{n} \frac{(1-\cycla)^\dagger\cycla^{i-1}}{1-\cos\left(\frac{2\pi(a-1)}{n}\right)} \nonumber \\
&=\frac{\beta\pwr}{4n^2} \sum_{a=2}^{n} \frac{\cos\left(\frac{2\pi(a-1)(i-1)}{n}\right)-\cos\left(\frac{2\pi(a-1)(i-2)}{n}\right)}{1-\cos\left(\frac{2\pi(a-1)}{n}\right)}\nonumber \\
&=\frac{\beta\pwr}{4n^2}  \left[-\sum_{a=2}^{n}\cos\left(\frac{2\pi(a-1)(i-2)}{n}\right)-\sum_{a=2}^{n}\frac{\sin\left(\frac{2\pi(a-1)(i-2)}{n}\right)\sin\left(\frac{2\pi(a-1)}{n}\right)}{1-\cos\left(\frac{2\pi(a-1)}{n}\right)}\right].
\label{eq:prob-diff-trans-components0}
\end{align}
In the first line, we used the expressions for $\Omega_a$ from \eqref{eq:omegaAexp}, $\lambda_a$ from \eqref{eq:eigLamExp}, $\mma$ from \eqref{eq:eigenvectors-real-basis}, and $D= I/\sqrt{n}$ and then simplified. In the second line, we expanded $(1-\cycla)^\dagger\cycla^{i-1}=\cycla^{i-1}-\cycla^{i-2}$ into real and imaginary components and then used that the imaginary components cancel over the sum. In the last line, we used the identity $\cos(x+y)=\cos(x)\cos(y)-\sin(x)\sin(y)$ for $x=\frac{2\pi(a-1)(i-2)}{n}$ and $y=\frac{2\pi(a-1)}{n}$ and then simplified. It can be verified that the first series in \eqref{eq:prob-diff-trans-components0} sums to $-1$. The second series is trickier but can be simplified using trigonometric methods as shown in Ref.~\cite{MSE104967}. Plugging in that solution and simplifying gives the very simple expression:
\begin{align}
\label{eq:prob-diff-trans-components}
\left(\Delta \ppLR\right)_i  =\frac{\beta  \Theta}{4n^2}\begin{cases}2(i-1)-(n+1) &{\text{\qquad for }} i=2,\ldots,n \\
(n-1)  &{\text{\qquad for }} i=1\end{cases}\,.
\end{align}
Thus, in the LR regime, the optimal distribution builds up in equal increments starting at $i=2$ until the optimal state $i=1$, after which it falls off a cliff. This is shown in Fig.~\ref{fig:delta_ps}.

\begin{SCfigure}
\centering{
\includegraphics[trim={0 0 18cm 0},clip, width=0.4\textwidth]{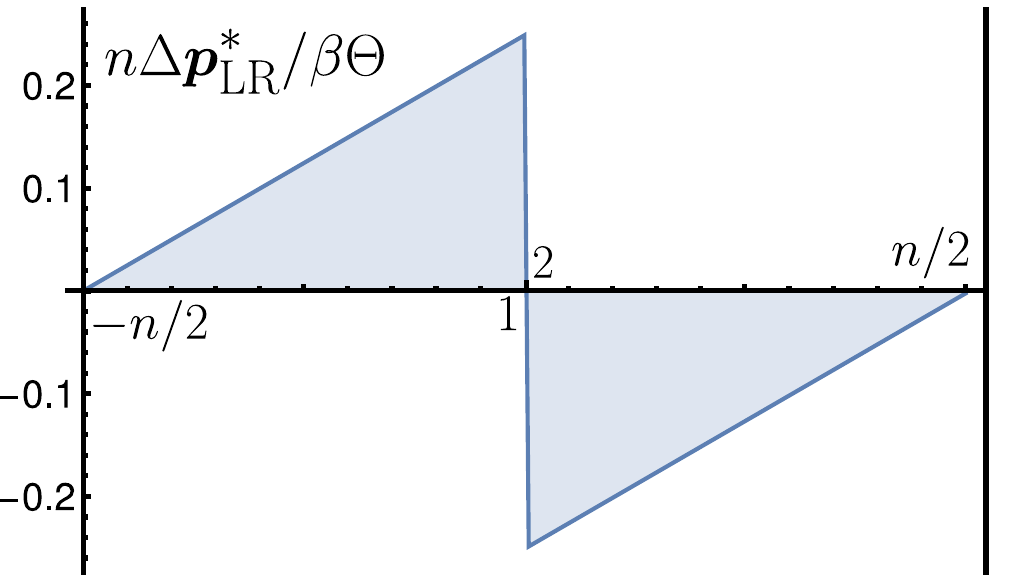} 
\caption{\label{fig:delta_ps} 
\figurefontcmd
The optimal $\Delta \ppLR$ (deviation of optimal distribution from the uniform baseline steady state)  given~\eqref{eq:prob-diff-trans-components}. 
The optimal distribution exhibits a clockwise cyclic current given by~\eqref{eq:Global-flux-irreversible-ring}, 
The magnitude of $\Delta \ppLR$  decays as $\sim n^{-1}$, so larger rings will result in optimal distributions that are closer to the baseline steady state.
}}
\end{SCfigure}

We can also compute the probability current across the transition $1\to 2$, which we leave as an exercise for the reader:
\begin{align}
\label{eq:Global-flux-irreversible-ring}
k{\left[(\ppLR)_1-(\ppLR)_2\right]}= \frac{k\beta\Theta}{2n}\left(\frac{n-1 }{n}\right)\,.
\end{align}

The expressions~\eqref{eq:LR-validity-RAW} and~\eqref{eq:LR-validity} suggest for which values of $\Theta$ the optimal solution will be in the LR regime. 
In this case, however, it is possible to work out the norm $\Vert \ppLR-\pstB\Vert $ exactly:
$$ \Vert \Delta \ppLR \Vert  = \frac{\beta\Theta}{4n^2} \left[\sum_{x=2}^n\left(2x-n-3\right)^2+(n-1)^2\right]^{1/2}=\frac{\beta \Theta}{4\sqrt{3}}\sqrt{\frac{n^2-1}{n^3}}$$
which implies that the LR regime is valid when
\begin{align}
\Vert \ppLR-\pstB \Vert \ll 1 \Leftrightarrow \Theta \ll \frac{4\sqrt{3}}{\beta}\sqrt{\frac{n^3}{n^2-1}}
\end{align}

In this case, the larger the number of states $n$ in the ring, the wider the interval of $\Theta$ values that will make the optimal solution fall into the LR regime. For large $n$, we obtain $$\Theta \ll \frac{4\sqrt{3}}{\beta}\sqrt{n}.$$

\subsubsection{D and ND Regimes}

Note that the optimal mesostate is $i^*=1$. Hence $\GmaxDF=\hvec_1=k\Theta$ and $(\ppM)_i=\delta_{i1}$. Next, consider the ND expressions~\eqref{eq:Optimum-p-PD} with the payoff vector \eqref{eq:payoff3}. Denote $\Delta \hvec_2=\hvec_{i^*}-\hvec_2=2k\Theta$ and $\Delta \hvec_n=\hvec_{i^*}-\hvec_n=k\Theta$; and observe that $\RB_{21}=\RB_{n1}=k$, $\RB_{11}=-2k$ and $\RB_{j 1}=0$ for all $j\neq 2,n$. Hence, for $i\neq i^*$, we obtain
\begin{align}
\begin{aligned}
(\ppFE)_{i}&=\frac{\RB_{i 1}}{\beta \Delta \hvec_{i}+\RB_{11}}=\begin{cases}\frac{1}{2(\beta\Theta -1)} &\qquad i=2\\
\frac{1}{\beta \Theta - 2} &\qquad i=n \\
0&\qquad {\text{otherwise}}\end{cases}\\
\left(\ppFE\right)_{i^*=1}&=1-\sum_{i:i\neq i^*}(\ppFE)_{i}=1-\frac{3\beta \Theta-4}{2(\beta\Theta -1)(\beta\Theta -2)} \,.
\end{aligned}
\label{eq:PD-Probability-Deviation-Irrev-Ring}
\end{align}
The harvesting rate attainable in this regime can be calculated using Eq.~\eqref{eq:Optimium-W-PD}, 
\begin{align}
\label{eq:PD-Extra-Power-Irrev-Ring}
\GmaxFE &=\hvec_1+{\beta^{-1}}\sum_{i:i\neq i^*}\RB_{i 1}\left[\ln \left(\ppFE\right)_i-1\right]\\
&=k(\Theta-2/\beta)-k\beta^{-1}\ln\left[2(\beta \Theta-1)(\beta\Theta -2)\right].
\end{align}

\clearpage
\subsection{State harvesting cycle}
\label{sec:rings2}
\global\long\def\GmaxLR{\mathscr{G}_{\text{LR}}}
\global\long\def\GmaxFE{\mathscr{G}_{\text{ND}}}
\global\long\def\GmaxDF{\mathscr{G}_{\text{D}}}

In this section, we consider the second unicyclic model, see Fig.~\ref{fig:rings-v2} and Fig.~\ref{fig:two-rings}~(b) in the main text. In this model, the baseline dynamics correspond once again to the
unicycle with symmetric rates, as in Sec.~\ref{sec:uni-static}.  However, 
harvesting is not coupled to transitions, but rather to internal fluxes within a single mesostate. Without loss of generality, we choose this mesostate to be $i=1$. \newstuff{In other words, we let $\dot{g}^b_1=\theta>0$ and $\dot{g}^b_{1<i\leq n}=0$.} 
There is now no preferential direction for the  probability current. The free energy harvested per unit time when the system is in mesostate $1$ is given by the parameter $\theta$, which carries the same units as $\Gmax$.
Just as before, the baseline steady state is uniform, i.e. $\pstBnb_i=1/n$. Hence, the harvesting vector is
\begin{align}
\label{eq:payoff4}
\psiLRvec=\hvecbold=(\theta,0,\ldots,0)\,.
\end{align}

Below we consider the LR, D and ND regimes.

\begin{SCfigure}
\centering
\includegraphics[trim={13cm 0 0 0},clip, width=0.3\textwidth]{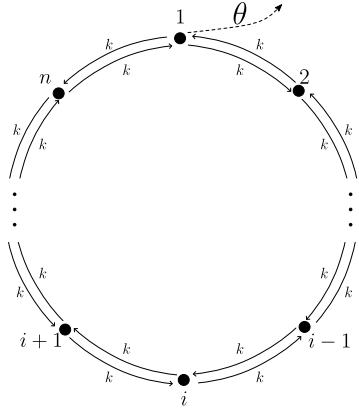}
\caption{\label{fig:rings-v2}
\figurefontcmd
We consider a unicyclic system with  transitions across the $n$ states with symmetric back-and-forth rates $k$. A single state (labelled $i=1$) leads to a flux of free energy at rate $\theta$. 
}
\end{SCfigure}

\subsubsection{LR Regime}

In analogy with the previous example, the expression for the upper bound in free energy harvesting rate under the LR regime here is obtained as
\begin{align}
\GmaxLR =\beta\sum_{a=2}^{n}\frac{\Omega_{a}^{2}}{-\lambda_{a}} \,,
\end{align}
with $\Omega_{a}= \invBeta \vv^T\mma $, where the vectors $\mma$ correspond to the eigensystem discussed in discussed in Sec.~\ref{sec:algebraic-aperitif}, i.e. in~\eqref{eq:eigenvectors-real-basis}. On the other hand, here $\vv=\frac{\beta\theta}{2\sqrt{n}}(1,0,\ldots,0)$. The latter is obtained by construction using~\eqref{eq:z-def}, \eqref{eq:free-energy-payoff-LR} and~\eqref{eq:v-def}. Thus, 
$$\Omega_{a}= \frac{\theta}{2n} \,.$$ 
For large even $n$, it is possible to show that~\cite{MSE4567421}
\begin{align}
\sum_{a=2}^{n/2}\frac{1}{1-\cos\left(\frac{2\pi(a-1)}{n}\right)}\simeq \frac{n^2}{12} + \mathcal{O}(1)\,,
\end{align}
which allows us to write 
\begin{align}\label{eq:deltaWmax-Localized-large-n-LR}
\GmaxLR \simeq\frac{\beta}{12k}\left(\frac{\theta}{2}\right)^2\,.
\end{align}
If we wanted to approximate the result above for a large odd $n$, we would need to add a term proportional to $1/4n^2$, which is of second order, hence the leading order is still captured by expression~\eqref{eq:deltaWmax-Localized-large-n-LR}.

We can also study the optimal distribution using~\eqref{eq:Prob-Diff-LR},
\begin{align}\label{eq:prob-diff-loc}
\Delta \ppLR=\ppLR-\pstB={\beta}\sum_{a=2}^{n} \frac{\Omega_{a}^{\dagger}D\mma}{-\lambda_{a}}=\frac{\beta\theta}{4k\numstates^2}\sum_{a=2}^{n}\frac{\uua}{1-\cos\left(\frac{2\pi(a-1)}{\numstates}\right)}
\end{align}
with $u_{(a)}:=\left(1,w_a,w_a^2,\ldots,w_a^{n-1}\right)$.
This is computed numerically and its behavior is shown in Fig.~\ref{fig:delta_ps}(b). We leave this computation as an exercise to the reader.

\begin{SCfigure}
\centering{
\includegraphics[trim={18cm 0 0 0},clip, width=0.4\textwidth]{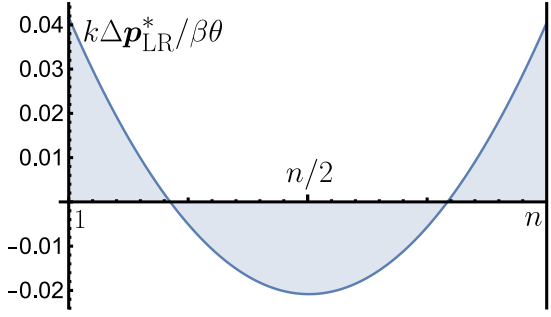} 
\caption{\label{fig:delta_ps-v2} 
\figurefontcmd
The optimal $\Delta \ppLR$ (the deviation of the optimal distribution from the uniform baseline steady state)  from~\eqref{eq:prob-diff-loc}. 
Note that there optimal distribution does not lead to a cyclic current. 
}}
\end{SCfigure}

Expression~\eqref{eq:LR-validity} suggests sufficiency conditions for when the optimum will be in the LR regime. 
Note that, in this case, $\max_{a>1}\Omega_{a}={\theta}/{2n}$ while $$\min_{a> 1}-\lambda_{a}=4k\min_{a> 1}\sin^2\left(\frac{\pi(a-1)}{n}\right)=4k\sin^2\left(\frac{\pi}{n}\right)\,.$$
Then, using $\Vert D\Vert =1/\sqrt{n}$, condition~\eqref{eq:LR-validity} will read as:
\begin{align}
\frac{\theta}{2n}\ll \frac{4k\sqrt{n}}{\beta(n-1)}\sin^2\left(\frac{\pi}{n}\right)\Leftrightarrow \theta \ll \frac{8\sqrt{n}k}{\beta}\left(\frac{n}{n-1}\right)\sin^2\left(\frac{\pi}{n}\right)\,.
\end{align}
For sufficiently large $n$, we can approximate $\sin^2(\frac{\pi}{n})\approx(\frac{\pi}{n})^2$, so the LR regime is valid when
\begin{align}
\theta \ll \frac{8\pi^2}{n^{3/2}}\frac{k}{\beta}\,.
\end{align}
Expression~\eqref{eq:LR-validity}, however, is not necessarily a tight bound. That is, the result hereby obtained is a sufficient condition for the LR regime to be valid, but not a necessary one. 

\subsubsection{D and ND Regimes}

The optimal mesostate is $i^*=1$. Hence 
$$\GmaxDF=\hvec_1=\theta \qquad \qquad (\ppM)_i=\delta_{i1}\,.$$ 
Next, by following a similar procedure as above, we denote $\Delta \hvec_2=\Delta \hvec_n=\theta$, $\RB_{21}=\RB_{n1}=k$, $\RB_{11}=-2k$ and $\RB_{j 1}=0$ for all $j\neq 2,n$. Thus, for $i\neq i^*$, we obtain:
\begin{align}
\begin{aligned}
\left(\ppFE\right)_i&=\frac{\RB_{i 1}}{\Delta \hvec_{i}+\RB_{11}}=\frac{1}{\beta\theta/k-2}\left(\delta_{2i}+\delta_{ni}\right)\\
(\ppFE)_{i^*=1}&=1-\sum_{i:i\neq i^*} (\ppFE)_i=\frac{\beta\theta/2k-2}{\beta\theta/2k-1} \,.
\end{aligned}
\end{align}
The harvesting rate attainable in this regime can be calculated using Eq.~\eqref{eq:Optimium-W-PD}, \begin{align}
\GmaxFE&=\hvec_1+\beta^{-1}\sum_{i:i\neq i^*}\RB_{i 1}\left[\ln (\ppFE)_i-1\right]\nonumber \\
&=\theta-2k{\beta}^{-1}\left[1+\ln\left(\beta \theta k^{-1}-2\right)\right]\,.
\end{align}

\clearpage
\section{Quadratic optimization lemma}
\label{sec:quad-theorem}

\theoremstyle{plain}
\newtheorem{thm}{\protect\theoremname}
\providecommand{\theoremname}{Lemma}

In this section, we provide a useful theorem for solving the quadratic optimization problems that occurs in our analysis of the linear response (LR) regime. It uses standard techniques from linear algebra.

\newcommand\Lmax{\mathscr{L}^*}
\begin{thm} 
\label{thm:quad}
Consider the maximization problem
\begin{equation}
\Lmax=\max_{\zz\in\mathbb{R}^{n}:\uu^{T}\zz=0}\zz^{T}M\zz+2\zz^{T}\vv,\label{eq:opt-problem2}
\end{equation}
where $M\in\mathbb{R}^{n\times n}$ is a negative semidefinite symmetric matrix
with a single null eigenvector $\uu\in\mathbb{R}^{n}$ and $\vv\in\mathbb{R}^{n}$ is
any vector. 
The solution is given by:
\begin{align}
 & \Lmax=-\vv^{T}M^{+}\vv\ \quad \text{and} \quad  \zz^{*}=-M^{+}\vv\label{eq:specialcase2},
\end{align}
where $M^+$ is the Moore-Penrose pseudo-inverse of $M$.

\end{thm}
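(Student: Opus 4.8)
The plan is to exploit the spectral decomposition of $M$ to turn the constrained quadratic program into a collection of decoupled one-dimensional maximizations. Since $M$ is symmetric, I would fix an orthonormal eigenbasis $\{\mma\}_{\alpha=1}^{n}$ with eigenvalues $\lambda_\alpha$, ordered so that $\alpha=1$ is the unique null mode, $\mm^1=\uu/\Vert\uu\Vert$ with $\lambda_1=0$, while $\lambda_\alpha<0$ for all $\alpha>1$ by negative semidefiniteness together with the single-null-vector hypothesis. The first observation is that the constraint $\uu^{T}\zz=0$ says exactly that $\zz$ has no component along $\mm^1$, i.e. $\zz$ lies in the range of $M$, which is the orthogonal complement of $\uu$. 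On this subspace $M$ acts negative definitely, so the objective $\zz^{T}M\zz+2\zz^{T}\vv$ is strictly concave and a unique maximizer exists; a vanishing-gradient (critical point) condition is therefore both necessary and sufficient.

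Next I would expand $\zz=\sum_{\alpha>1}c_\alpha\mma$ in the admissible subspace and write the objective as the decoupled sum $\sum_{\alpha>1}\big(\lambda_\alpha c_\alpha^{2}+2c_\alpha\,\vv^{T}\mma\big)$. Each scalar term is maximized at $c_\alpha=-(\vv^{T}\mma)/\lambda_\alpha$, with per-mode value $-(\vv^{T}\mma)^{2}/\lambda_\alpha$. Summing over $\alpha>1$ and recognizing the spectral form $M^{+}=\sum_{\alpha>1}\lambda_\alpha^{-1}\mma(\mma)^{T}$ of the Moore-Penrose pseudo-inverse immediately yields $\zz^{*}=-\sum_{\alpha>1}(\vv^{T}\mma/\lambda_\alpha)\mma=-M^{+}\vv$ and $\Lmax=-\sum_{\alpha>1}(\vv^{T}\mma)^{2}/\lambda_\alpha=-\vv^{T}M^{+}\vv$, which are the claimed expressions.

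An equivalent route, which I would record as a cross-check, uses Lagrange multipliers: writing the stationarity condition $M\zz=\mu\uu-\vv$, projecting onto $\uu$ and using $\uu^{T}M=0$ fixes $\mu=\uu^{T}\vv/\Vert\uu\Vert^{2}$, after which applying $M^{+}$ and using $M^{+}\uu=0$ recovers $\zz^{*}=-M^{+}\vv$; the residual null-space freedom is killed by the constraint because $\uu^{T}M^{+}\vv=0$.

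I expect the only real subtlety to be the bookkeeping around the pseudo-inverse rather than any deep difficulty. Specifically, I must verify that the candidate $\zz^{*}=-M^{+}\vv$ satisfies the constraint $\uu^{T}\zz^{*}=0$, which follows from $M^{+}\uu=0$ (equal null spaces of $M$ and $M^{+}$) together with symmetry of $M^{+}$; and that evaluating the objective collapses cleanly, which relies on the identities $M^{+}MM^{+}=M^{+}$ and $(M^{+})^{T}=M^{+}$. These give $(\zz^{*})^{T}M\zz^{*}=\vv^{T}M^{+}\vv$ and $2(\zz^{*})^{T}\vv=-2\vv^{T}M^{+}\vv$, whose sum is $-\vv^{T}M^{+}\vv$. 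Confirming that the single-null-vector hypothesis forces strict negativity of every $\lambda_\alpha$ with $\alpha>1$, so that the reciprocals appearing in $M^{+}$ are well defined, is the one place where the stated assumptions are genuinely essential.
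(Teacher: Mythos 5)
Your proposal is correct and follows essentially the same route as the paper: both rest on the spectral decomposition of $M$ and the spectral form of $M^{+}$, with the paper completing the square in matrix form ($\zz^{T}M\zz+2\zz^{T}\vv=(\zz+M^{+}\vv)^{T}M(\zz+M^{+}\vv)-\vv^{T}M^{+}\vv$) where you instead decouple the objective mode by mode --- a cosmetic rather than substantive difference. Your attention to the normalization of $\uu$ and to strict negativity of the nonzero eigenvalues is sound and matches the assumptions the paper uses implicitly.
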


\begin{proof}

Since $M$ is symmetric, its real valued eigendecomposition can be written as: $$M= \sum_{\evndx=1}^n\lambda_{\evndx}\uua{\uua}^T$$ where we will choose $\uu=\uuIX{1}$, without loss of generality, i.e. $\lambda_1=0$ is the only zero eigenvalue. The Moore-Penrose pseudo-inverse of $M$ can be written as: $$M^+ = \sum_{\evndx=2}^n \frac{1}{\lambda_\evndx} \uua {\uua}^T.$$

The constraint on the optimization problem reads as $\zz^T\uu=0$, which implies that $$M^+M\zz=(I-\uu \uu^T )\zz=\zz.$$ Now, consider the objective function,
\begin{align}
\zz^{T}M\zz+2\zz^{T}\vv= \zz^{T}M\zz+\vv^{T}M^+M\zz+\zz^{T}MM^+\vv=\left(\zz+M^+\vv\right)^TM\left(\zz+M^+\vv\right)-\vv^T M^+\vv
\end{align}
where we used properties of the transverse operation, symmetry of $M$ and $M^+$, that $M^+M\zz=\zz \Leftrightarrow \zz^T=\zz^TMM^+$, $\vv^T\zz=\zz^T\vv$ and that $M^+MM^+=M^+$. Note that the last term on the right hand side of the previous expression does not depend on $\zz$, and that the first term is always non-positive because $M$ is negative semidefinite. Thus, the objective is maximized only by setting the first term to zero, which is achieved by $\zz=-M^+\vv$. Upon substitution, this yields $\Lmax=-\vv M^+ \vv$ .

\end{proof}

\fi

\end{document}